\newcommand\norm[1]{\left\lVert#1\right\rVert}
\newcommand\myeqa{\mathrel{\stackrel{\makebox[0pt]{\mbox{\normalfont\tiny (a)}}}{=}}}
\newcommand\myeqb{\mathrel{\stackrel{\makebox[0pt]{\mbox{\normalfont\tiny (b)}}}{\geq}}}
\newcommand\myeqc{\mathrel{\stackrel{\makebox[0pt]{\mbox{\normalfont\tiny (c)}}}{=}}}
\newcommand\myeqd{\mathrel{\stackrel{\makebox[0pt]{\mbox{\normalfont\tiny (d)}}}{=}}}
\newcommand\myeqe{\mathrel{\stackrel{\makebox[0pt]{\mbox{\normalfont\tiny (e)}}}{=}}}
\newcommand\myeqf{\mathrel{\stackrel{\makebox[0pt]{\mbox{\normalfont\tiny (f)}}}{\geq}}}
\newcommand\myeqg{\mathrel{\stackrel{\makebox[0pt]{\mbox{\normalfont\tiny (g)}}}{=}}}
\newcommand\myeqh{\mathrel{\stackrel{\makebox[0pt]{\mbox{\normalfont\tiny (h)}}}{=}}}
\newcommand\myeqi{\mathrel{\stackrel{\makebox[0pt]{\mbox{\normalfont\tiny (i)}}}{=}}}
\newcommand\myeqj{\mathrel{\stackrel{\makebox[0pt]{\mbox{\normalfont\tiny (j)}}}{=}}}
\newcommand\myeqk{\mathrel{\stackrel{\makebox[0pt]{\mbox{\normalfont\tiny (k)}}}{=}}}
\newcommand\myeql{\mathrel{\stackrel{\makebox[0pt]{\mbox{\normalfont\tiny (l)}}}{=}}}
\newcommand\myeqm{\mathrel{\stackrel{\makebox[0pt]{\mbox{\normalfont\tiny (m)}}}{=}}}
\newcommand\myeqn{\mathrel{\stackrel{\makebox[0pt]{\mbox{\normalfont\tiny (n)}}}{=}}}
\newcommand\myeqo{\mathrel{\stackrel{\makebox[0pt]{\mbox{\normalfont\tiny (o)}}}{=}}}
\newcommand\myeqp{\mathrel{\stackrel{\makebox[0pt]{\mbox{\normalfont\tiny (p)}}}{\geq}}}
\newcommand\myeqq{\mathrel{\stackrel{\makebox[0pt]{\mbox{\normalfont\tiny (q)}}}{\geq}}}
\newcommand\myeqr{\mathrel{\stackrel{\makebox[0pt]{\mbox{\normalfont\tiny (r)}}}{\geq}}}
\newcommand\myeqs{\mathrel{\stackrel{\makebox[0pt]{\mbox{\normalfont\tiny (s)}}}{=}}}
\newcommand\myeqt{\mathrel{\stackrel{\makebox[0pt]{\mbox{\normalfont\tiny (t)}}}{\geq}}}
\newcommand\myequ{\mathrel{\stackrel{\makebox[0pt]{\mbox{\normalfont\tiny (u)}}}{=}}}
\newcommand\myeqv{\mathrel{\stackrel{\makebox[0pt]{\mbox{\normalfont\tiny (v)}}}{\geq}}}
\newcommand\myeqw{\mathrel{\stackrel{\makebox[0pt]{\mbox{\normalfont\tiny (w)}}}{=}}}
\newcommand\myeqx{\mathrel{\stackrel{\makebox[0pt]{\mbox{\normalfont\tiny (x)}}}{\geq}}}
\newcommand\myeqy{\mathrel{\stackrel{\makebox[0pt]{\mbox{\normalfont\tiny (y)}}}{=}}}
\newcommand\myeqz{\mathrel{\stackrel{\makebox[0pt]{\mbox{\normalfont\tiny (z)}}}{=}}}
\newcommand\myeqaa{\mathrel{\stackrel{\makebox[0pt]{\mbox{\normalfont\tiny (aa)}}}{\geq}}}
\newcommand\myeqab{\mathrel{\stackrel{\makebox[0pt]{\mbox{\normalfont\tiny (ab)}}}{=}}}
\title{\LARGE \bf
Interplay Between Transmission Delay, Average Data Rate, and
Performance in Output Feedback Control over Digital Communication
Channels*
}
\author{Mohsen Barforooshan$^{1}$, Jan \O stergaard$^{1}$, and Milan S. Derpich$^{2}$
\thanks{*This work has received funding from VILLUM FONDEN Young
	Investigator Programme, under grant agreement No. 19005.}
\thanks{$^{1}$Mohsen Barforooshan and Jan \O stergaard are with the Department of
	Electronic Systems,
        Aalborg University, Nieles Jernes Vej 12, DK-9220,
        Aalborg, Denmark
        {\tt\small \{mob,jo\}@es.aau.dk}}%
\thanks{$^{2}$Milan S. Derpich is with the Department of Electronic Engineering, Universidad T\'ecnica Federico Santa Mar\'ia,
       Casilla 110-V, Valpara\'iso, Chile
        {\tt\small milan.derpich@usm.cl}}%
}
\begin{document}
\maketitle
\thispagestyle{empty}
\pagestyle{empty}

\begin{abstract}

The performance of a noisy linear time-invariant (LTI) plant, controlled over a noiseless digital channel with transmission delay, is investigated in this paper. The rate-limited channel connects the single measurement output of the plant to its single control input through a causal, but otherwise arbitrary, coder-controller pair. An infomation-theoretic approach is utilized to analyze the minimal average data rate required to attain the quadratic performance when the channel imposes a known constant delay on the transmitted data. This infimum average data rate is shown to be lower bounded by minimizing the directed information rate across a set of LTI filters and an additive white Gaussian noise (AWGN) channel. It is demonstrated that the presence of time delay in the channel increases the data rate needed to achieve a certain level of performance. The applicability of the results is verified through a numerical example. In particular, we show by simulations that when the optimal filters are used but the AWGN channel (used in the lower bound) is replaced by a simple scalar uniform quantizer, the resulting operational data rates are at most around 0.3 bits above the lower bounds.        

\end{abstract}

\section{INTRODUCTION}\label{sec1}

Taking communication imperfections into account for
analysis and design has proved to be an overwhelming topic
within the area of control theory during recent years. Among those imperfections,
time delay, packet dropout and bit rate constraint (quantization) are
prominent ones, which may worsen the performance and even bring destabilization to networked control systems (NCSs) \cite{zhang2013network},\cite{baillieul2007control}.

Rate-constrained NCSs are generally studied form two points of view; control theory and information theory. Regarding the first viewpoint, classical nonlinear control methods are deployed (see, e.g., \cite{delchamps1990stabilizing,wong1999systems} as early results). As a recent contribution,  \cite{almakhles2015stability} obtains stability conditions in terms of the quantizer's step size by using sliding mode analysis. For the second point of view, the key idea  is extending information-theoretic notions to the case of closed-loop control. Regarding stabilization, early results can be found in \cite{nair2004stabilizability}. Recently, in \cite{lupu2015information}, mean square stability (MSS) conditions for an unstable human-in-the-loop system are stated in terms of a lower bound on the information rate between control input and system output.

For performance, efforts fall into either control-based or information theory-based
approach as well. Extending Bode integral to discrete linear time-periodic multirate systems is carried out in \cite{zhao2015bode}. A packetized predictive control strategy is studied in \cite{ostergaard2016multiple} where the Markov jump linear systems (MJLSs) theory is used for deriving an upper bound on the bit rate required to attain a desired performance level, under the circumstances of entropy-coded dithered quantization (ECDQ\footnote{With some abuse of notation, and depending on the context, ECDQ is also to be understood as entropy-coded dithered quantizer}) with multiple descriptions.

Studies analyzing system performance from an information-theoretic viewpoint are less abundant in the literature. Primary results are presented in \cite{martins2008feedback}. In this work, for a discrete-time linear time-invariant (LTI) plant, the well-known Bode's integral is extended to the case with causal rate-limited arbitrary feecback. Along the lines of \cite{martins2008feedback},  research reported in \cite{silva2011framework, silva2013characterization} has investigated bounds on the minimum data rate which is needed to attain a quadratic performance level in NCSs with delay-free channel. While the design approach proposed in \cite{silva2011framework} is a second-stage one,  coding and control are designed jointly in \cite{silva2013characterization}. For the lower bound, \cite{silva2013characterization} shows that the rate-constrained optimization to find desired infimal data rate over causal but otherwise arbitrary coder-controller pairs, is reduced to a convex SNR-constrained optimization over an auxiliary LTI feedback loop with additive white Gaussian noise (AWGN). Based on the SNR-constrained approach, \cite{silva2013characterization} proposes an ECDQ-based linear coding scheme which leads to an upper bound at most 1.25 bits away from the obtained lower bound. Then the main idea followed in \cite{silva2011framework,silva2013characterization} (that is, minimizing directed information rate to get lower bound and entropy coding for upper bound) is applied to LQG control of a fully-observable multiple-input multiple-otput (MIMO) plant in \cite{tanaka2016rate}. The  authors of \cite{tanaka2016rate} deploy semidefinite programming (SDP) to solve corresponding optimization problems.

In this paper, we address output feedback control of an NCS comprised of a noisy LTI plant and a causal encoder-controller-decoder set connected through a noiseless digital channel with a constant transmission delay. More specifically, the problem  is obtaining the bounds on minimal average data rate required to guarantee that the steady-state variance of an error signal does not become larger than a certain value. Motivated by its merits such as simplicity and practical appeal, we use the approach pursued in \cite{silva2011framework,silva2013characterization} to gain outer bounds and build upon \cite{silva2013characterization}. However, the main departure of this work from  \cite{silva2013characterization} is considering a channel which is not delay-free. So, as the first contribution, we rederive fundamental information inequalities of the system under the delay assumption. Secondly, we characterize the trade-off among performance, delay, and minimal desired average data rate. It is shown through a numerical example that greater transmission delay necessitates greater minimal average data rate needed to guarantee achieveing the considered quadratic level of performance. Simulation indicates that by deploying a simple scalar uniform quantizer in the LTI architecture that gives the lower bound, the quadratic performance is attained by operational average data rates at most $0.3$ bits away from the lower bound.  

The outline of this work is as follows. Section
II presents the notation and some preliminaries. Then the problem of interest is formalized in Section III. Section IV is dedicated to the lower bound characterization. An illustrative numerical simulation is provided in section V. Finally, Section VI concludes the paper.                       
\section{NOTATION AND PRELIMINARIES}\label{sec2}
The set of real numbers is denoted by $\mathbb{R}$ with subset ${\mathbb{R}}^{+}$ as the set of strictly positive real numbers. $\mathbb{N}$ represents the set of natural numbers, based upon which ${\mathbb{N}_{0}}=\mathbb{N}\cup\{0\}$ is defined. Furthermore, $k$ is the time index and for random processes considered in this paper, $k\in{\mathbb{N}_{0}}$ holds.
 Magnitude and $H_2$-norm of a signal are symbolized by $|.|$  and ${\norm{.}}_{2}$, respectively. Furthermore, the set ${{\mathcal{U}}_\infty}$ is defined as the
set of all proper and real rational stable transfer functions with inverses that are
stable and proper as well. $\mathcal{E}$ denotes the expectation operator and $\log$ stands for the natural logarithm. The entry of matrix $S$ on the $i$-th row and $j$-th column is denoted by $[S]_{i,j}$. Moreover, ${\lambda}_{min}(S)$ and ${\lambda}_{max}(S)$ represent eigenvalues of $S$ with the smallest and largest magnitude, respectively.

All random variables and processes in this paper are assumed to be vector valued, unless otherwise stated. The mutual information between random variables $V$ and $W$ is defined as $I(V;W)\triangleq{H(V)-H(V\mid{W})}$ whit $H(.)$ denoting the entropy. If those variables are conditioned upon another rendom variable, say $Y$, then  $I(V;W\mid{Y})\triangleq{H(V\mid{Y})-H(V\mid{W,Y})}$ defines the conditional mutual information between $V$ and $W$ given $Y$ \cite{cover2012elements}. A random process $\xi$ is said to be asymptotically wide-sense stationary (AWSS) if it satisfies $\lim_{k\to\infty}{\mathcal{E}[\xi(k)]}={\nu}_{\xi}$ and $\lim_{k\to\infty}{\mathcal{E}[(\xi(k+\tau)-\mathcal{E}[\xi(k+\tau)]){(\xi(k)-\mathcal{E}[\xi(k)])}^{T}]}={R}_{\xi}(\tau)$ hold, where ${\nu}_{\xi}$ is a constant. ${C_\xi}={R}_{\xi}(0)$ denotes the corresponding steady-state covariance matrix upon which the steady-state variance of $\xi$ is defined as $\sigma_\xi^2\triangleq\mathrm{trace}({C_\xi})$. The covariance matrix for a scalar random sequence ${x}_{1}^{k}\!\!\!\triangleq\!\!{[x(1)\dots x(k)]^T}$ is defined as $C_{{x}_{1}^{k}}={\mathcal{E}[({{x}_{1}^{k}}-\mathcal{E}[{{x}_{1}^{k}}]){({{x}_{1}^{k}}-\mathcal{E}[{{x}_{1}^{k}}])}^{T}]}$. 
 \begin{figure}[thpb]
 	\centering
 	\includegraphics[width=6cm,height=3.5cm]{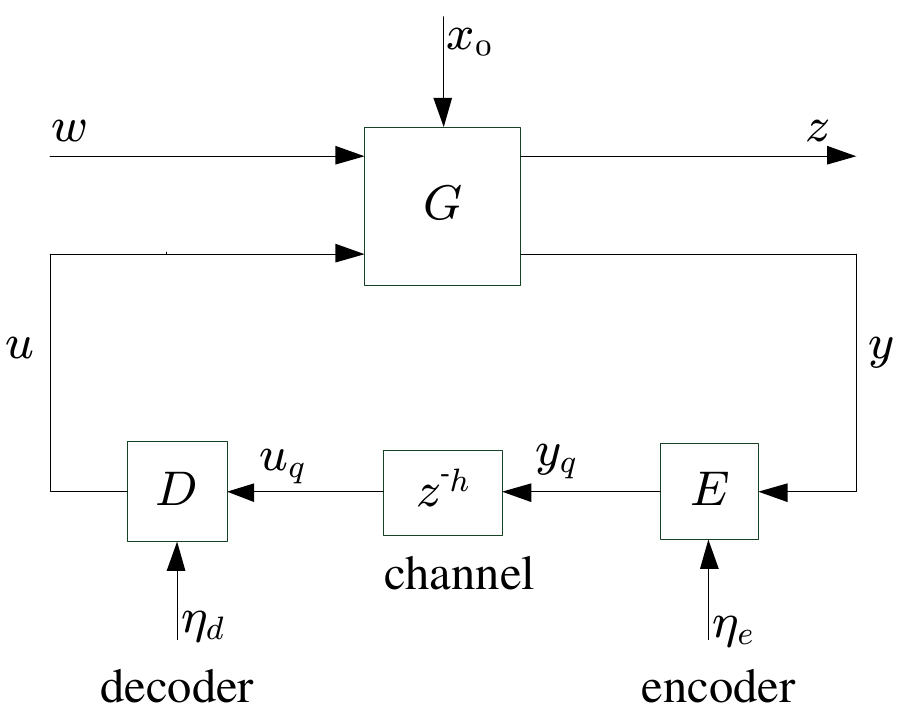}
 	\caption{Considered NCS}
 	\label{fig1}
 \end{figure}
Considering ${P_n},{Q_n}\in{{\mathbb{R}}^{n\times n}}$ as two square matrices, the sequences $\{P_n\}_{n=1}^{\infty}$ and $\{Q_n\}_{n=1}^{\infty}$ are asymptotically equivalent if and only if the following holds for finite $\varrho$:  
\begin{IEEEeqnarray*}{rl}
\begin{split}
	&\lim_{n\to\infty}\frac{1}{n}\sum_{i=1}^{n}\sum_{j=1}^{n}{{|{[{P_n}-{Q_n}]}_{i,j}|}^{2}}=0\\
&|{\lambda}_{max}(P_n)|,|{\lambda}_{max}(Q_n)|\leq{\varrho},\quad\forall{n\in{\mathbb{N}}} 
	\end{split}
\end{IEEEeqnarray*}
\section{PROBLEM STATEMENT AND SETUP}
 The structure considered in this work can be found in Fig.~{\ref{fig1}} where $G$ is an LTI plant with $u\in\mathbb{R}$ as control input and $y\in\mathbb{R}$ as sensor output. Moreover, there is a disturbance represented by $w\in \mathbb{R}^{n_{w}}$ and the output signal $z\in \mathbb{R}^{n_{z}}$ upon which the  desired performance is characterized. The plant has the following transfer-function matrix description:
\begin {equation}\label{eq1}
\left[
\begin{array}{c}
	z\\
	y
\end{array}\right]=\left[
\begin{array}{lr}
	G_{11}&G_{12}\\
	G_{21}&G_{22}
\end{array}\right]\left[
\begin{array}{c}
	w\\
	u
\end{array}\right],
\end{equation}
in which every $G_{ij}$ is proper and of suitable dimension. The input alphabet of the channel is represented by $\mathcal{A}$ and is defined as a countable set of prefix-free binary words. Due to the delay, the output of the channel $u_q(k)$ follows ${u_q}(k)={y_q}(k-h)$ for $k\geq{h}$ where $y_q(k)$ belongs to $\mathcal{A}$.
 The average data rate across the channel is specified as follows:
\begin{equation}\label{eq2}
\mathcal{R}
\triangleq\lim_{k \to \infty}
\frac{1}{k}
\sum_{i=0}^{k-1}R(i),
\end{equation}
where $R(i)$ denotes the expected length of the $i$-th binary word ${y_q}(i)$.  
The channel input is provided by the encoder $E$ based on the following dynamics:
\begin{equation}\label{eq3}
{y_q(k)}={E_k}({y^k},{\eta_{e}^{k}}),
\end{equation}  
in which ${{\eta_{e}}(k)}$ is the side information at time $k$ at the encoder with ${E_k}$ representing an arbitrary (possibly nonlinear or time-varying) deterministic mapping. It should be noted that $\beta^k$ is a shorthand for $[\beta(0),\cdots,\beta(k)]$. On the decoder side, we have
\begin{equation}\label{eq4}
{u(k)}=
\begin{cases}
{D_k}({\eta_{d}^{k}}), &0\leq{k}<h,\\
{D_k}({{y}_{q}^{k-h}},{\eta_{d}^{k}}), &k\geq{h}.
\end{cases}
\end{equation}
${{D}_k}$ is assumed to be an arbitrary deterministic mapping, like ${{E}_k}$, and ${\eta_{d}(k)}$ signifies the side information available at the decoder at time $k$. It should be emphasized that $E$ and $D$ in Fig.~\ref{fig1} are possibly time-varying or nonlinear causal systems. 
\newtheorem{Assumption}{Assumption}[section]
\begin{Assumption}\label{ass1}
	The plant $G$ is LTI, proper and free of unstable hidden modes. Moreover, the open-loop transfer function from $u$ to $y$ is single-input single-output (SISO) and strictly proper. The disturbance signal, $w$, is a zero-mean white noise with identity covariance matrix ${C_w}=I$ and jointly Gaussian with ${x_0}=[x(-h),\cdots,x(0)]^T$, the initial condition, having finite differential entropy.
\end{Assumption}
\begin{Assumption}\label{ass2}
Each of processes $\eta_e$ and $\eta_d$ is jointly independent of $({x_0},w)$. So regarding the dynamics of the system,  $I(u(k);{y^{k-h}}\mid{u^{k-1}})=0$  holds for $0\leq{k}<h$. Moreover, upon knowledge of $u^{i}$ and $\eta_{d}^{i}$, the decoder is invertible. It means that there exists a deterministic mapping $Q_{i}$ such that $u_{q}^{i}=Q_{i}(u^i,\eta_{d}^{i})$.     
\end{Assumption}
\newtheorem{Remark}{Remark}[section]
\begin{Remark}\label{rem1}
Regarding the aforementioned setup, invertibility is a reasonable assumption for the decoder. It can be proved that for the architecture of Fig.~\ref{fig1}, any encoder and non-invertible decoder pair, with mappings $E_k$ and $D_k$, can be replaced by another pair with the same input-output relationship and lower average data rate where the decoder is invertible. Due to space limitations, we eliminate the proof.
\end{Remark}

\newtheorem{Definition}{Definition}[section]
\begin{Definition}\label{def1}
A scalar AWSS process $x$ converging to a wide sense stationary process $\bar{x}$ is called strongly asymptotically wide-sense stationary (SAWSS) if asymptotic equivalence holds between their covariance matrices, ${\{C}_{x_1^n}\}_{n=1}^{\infty}$ and ${\{C}_{{\bar{x}}_1^n}\}_{n=1}^{\infty}$. Accordingly, in an SAWSS NCS, all covariance and cross-covariance matrices of internal signals are asymptotically equivalent to their counterparts in processes to which they converge.
\end{Definition}

Now, suppose that Assumption~\ref{ass1} holds. Let $D_{inf}(h)$ denote the infimum steady-state variance of the output $z$ over all settings $u(k)={\mathcal{K}}_{k}(\gamma^{k})$ for $0\leq{k}<h$ and $u(k)={\mathcal{K}}_{k}(y^{k-h})$ for $k\geq{h}$ with $\gamma^{k}$ independent of $x_0$ and $w$. Then the problem of our interest is finding 
\begin{equation}\label{eq5}
	{\mathcal{R}}(D)=\inf_{{{\sigma}_{z}^{2}}\leq{D}}
	\mathcal{R}
\end{equation}
for any $D\in{(D_{inf}(h),\infty)}$, where the search is to be restricted to encoders with mapping $E_k$ and decoders with mapping $D_k$ which satisfy Assumption~\ref{ass2} and make the NCS of Fig.~\ref{fig1} SAWSS. Moreover, ${{\sigma}_{z}^{2}}$ denotes the steady-state variance of $z$. It can be proved that the optimization problem in (\ref{eq5}) is feasible if ${D}\in({D_{inf}}(h),\infty)$ (see Appendix \ref{suba1}).
 \section{MAIN RESULTS} 
This section shows that in order to obtain a lower bound on ${\mathcal{R}}(D)$, one can minimize the directed information over an auxiliary coding scheme formed of LTI filters and an AWGN channel with feedback and delay. Regarding this optimization problem, inequalities and identities in \cite{silva2013characterization} will be extended to the case with a channel subject to delay in the following.
\newtheorem{Theorem}{Theorem}[section]
\begin{Theorem}\label{th1}
For the feedback loop  depicted in Fig.~\ref{fig1} and satisfying Assumptions~\ref{ass1} and \ref{ass2}, the following holds: 
\begin{equation}\label{eq6}
\mathcal{R}
\geq{{I_{\infty}^{(h)}}(y\to{u})}=\lim_{k \to \infty}
\tfrac{1}{k}
\Sigma_{i=0}^{k-1}I(u(i);{y^{i-h}}\mid{u^{i-1}}),
\end{equation}
in which $I(.;.\mid{.})$ indicates conditional mutual information. Moreover, as defined in \cite{derpich2013fundamental}, ${{I_{\infty}^{(h)}}(y\to{u})}$ denotes the directed information  rate across the forward channel from $y$ to $u$ with delay $h$. For the proof, see Appendix ~\ref{suba2}.	
\end{Theorem}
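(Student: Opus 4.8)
The plan is to lower-bound the operational rate by a discrete entropy and then lower-bound that entropy by the delayed directed information, using three ingredients: the prefix-free structure of the channel alphabet $\mathcal{A}$, the decoder invertibility of Assumption~\ref{ass2}, and the independence of the side information $\eta_e,\eta_d$ from $(x_0,w)$.

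First I would convert the rate into an entropy. Since each word $y_q(i)\in\mathcal{A}$ and $\mathcal{A}$ is prefix-free, Kraft's inequality together with the classical lower bound on the expected length of a prefix-free code gives $R(i)=\mathcal{E}[\ell(y_q(i))]\geq H(y_q(i))\geq H(y_q(i)\mid y_q^{i-1})$, the last step being that conditioning does not increase entropy. Summing over $i$ and telescoping via the entropy chain rule yields
\begin{equation*}
\sum_{i=0}^{k-1}R(i)\geq H(y_q^{k-1})\geq H(y_q^{k-1-h})=\sum_{i=h}^{k-1}H\big(y_q(i-h)\mid y_q^{i-1-h}\big).
\end{equation*}

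Next, and this is the heart of the argument, I would establish the per-word bound $I(u(i);y^{i-h}\mid u^{i-1})\leq H(y_q(i-h)\mid y_q^{i-1-h})$ for every $i\geq h$. The reasoning is that, given $u^{i-1}$ together with $\eta_d^{i}$, decoder invertibility ($u_q^{i}=Q_i(u^i,\eta_d^i)$, hence the already delivered words $y_q^{i-1-h}$) makes the entire past available, while $u(i)=D_i(y_q^{i-h},\eta_d^i)$ depends on the newly arrived data only through the single word $y_q(i-h)$. A data-processing step then bounds $I(u(i);y^{i-h}\mid u^{i-1},\eta_d^i)$ by $I(y_q(i-h);y^{i-h}\mid u^{i-1},\eta_d^i)\leq H(y_q(i-h)\mid u^{i-1},\eta_d^i)\leq H(y_q(i-h)\mid y_q^{i-1-h})$, where the final inequality again uses that conditioning on the richer information $(u^{i-1},\eta_d^i)$, which determines $y_q^{i-1-h}$, only decreases entropy. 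For $0\leq i<h$ the corresponding term vanishes outright by the second part of Assumption~\ref{ass2}.

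Combining the two displays gives $\sum_{i=0}^{k-1}R(i)\geq\sum_{i=0}^{k-1}I(u(i);y^{i-h}\mid u^{i-1})$; dividing by $k$ and letting $k\to\infty$ produces exactly \eqref{eq6}, the $O(h/k)$ index shift washing out in the limit. I expect the main obstacle to be discharging the side-information terms rigorously: passing from $I(u(i);y^{i-h}\mid u^{i-1})$ to the conditioned quantity $I(u(i);y^{i-h}\mid u^{i-1},\eta_d^i)$ requires showing that $\eta_d^i$ carries no information about $y^{i-h}$ given $u^{i-1}$, which is delicate because $\eta_d$ re-enters the loop through past controls and hence influences $y$. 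This is precisely where the independence of $(\eta_e,\eta_d)$ from $(x_0,w)$ in Assumption~\ref{ass2}, combined with the causal, strictly proper $u\to y$ structure of Assumption~\ref{ass1}, must be invoked to secure the relevant conditional independence.
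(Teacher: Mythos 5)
Your proposal is correct and takes essentially the same route as the paper's proof in Appendix B: lower-bound the rate by the (conditional) entropy of the transmitted words, use decoder invertibility and data processing to pass to $I(u(i);y^{i-h}\mid u^{i-1})$, discharge the side information via the Markov chain $\eta_{d}^{i}-u^{i-1}-y^{i-h}$ (exactly the paper's step (z), which it imports from Lemma 4.2 of \cite{silva2011framework} using Assumption~\ref{ass2} and the linear dynamics $y(k)=\mathcal{G}_k(u^{k-1},w^k,x(0))$ --- so the ``delicate'' step you flag is indeed secured there), and kill the $i<h$ terms by Assumption~\ref{ass2}. The only superficial difference is the opening: you apply Kraft's inequality directly to the codewords $y_q$, whereas the paper first decomposes the coder into lossy, lossless and reproduction parts and bounds $R(k)\geq H(y_{\mathcal{E}}(k)\mid y_{\mathcal{E}}^{k-1},\eta_{O}^{k})$; both are standard and yield the same per-word inequality.
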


 Now, a lower bound can be derived on the directed information across the coding scheme of Fig.~\ref{fig1}.

\newtheorem{Lemma}{Lemma}[section]
\begin{Lemma}\label{lemma41}
	For the NCS of Fig.~\ref{fig1}, assume that $(x(0),w,u,y)$ form a jointly second-order set of processes and that Assumptions~\ref{ass1} and \ref{ass2} hold. Moreover, take $y_G$ and $u_G$ into account as the Gaussian  counterparts of $y$ and $u$ where $(x(0),w,u_G,y_G)$ are jointly Gaussian with the same first-and second-order (cross-) moments as $(x(0),w,u,y)$. Then ${{I_{\infty}^{(h)}}(y\to{u})\geq{I_{\infty}^{(h)}}({y_G}\to{u_G})}$.    
\end{Lemma}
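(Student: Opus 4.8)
The plan is to prove the inequality at the level of the per-instant conditional mutual informations that make up the directed information rate, and then to pass to the Ces\`aro limit. Writing $h(\cdot)$ for differential entropy, I would begin from the decomposition
\begin{equation*}
I(u(i);y^{i-h}\mid u^{i-1}) = h(u(i)\mid u^{i-1}) - h(u(i)\mid u^{i-1},y^{i-h}),
\end{equation*}
together with the analogous identity for the Gaussian counterpart $(u_G,y_G)$. The target is the per-instant bound $I(u(i);y^{i-h}\mid u^{i-1}) \ge I(u_G(i);y_G^{i-h}\mid u_G^{i-1})$, after which summing over $i$, normalising by $k$, and letting $k\to\infty$ delivers the statement.

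The subtracted term is the easy one. Since $u$ is scalar (Assumption~\ref{ass1}), the conditional-mean estimator of $u(i)$ given $(u^{i-1},y^{i-h})$ has mean-square error no larger than that of the linear least-squares estimator, and the Gaussian law maximises differential entropy at fixed variance; hence
\begin{equation*}
h(u(i)\mid u^{i-1},y^{i-h}) \le \tfrac12\log\!\big(2\pi e\,\sigma^2_{\mathrm{LMMSE},i}\big),
\end{equation*}
where $\sigma^2_{\mathrm{LMMSE},i}$ is the linear MMSE of predicting $u(i)$ from $(u^{i-1},y^{i-h})$. Because $(u_G,y_G)$ matches the second-order moments of $(u,y)$, this same $\sigma^2_{\mathrm{LMMSE},i}$ is precisely the conditional variance of the jointly Gaussian $u_G(i)$ given $(u_G^{i-1},y_G^{i-h})$, so the right-hand side equals $h(u_G(i)\mid u_G^{i-1},y_G^{i-h})$. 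Thus the reverse term of $I$ is bounded above by that of $I_G$, which is the direction we need.

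The main obstacle is the forward term $h(u(i)\mid u^{i-1})$, for which the maximum-entropy bound runs the \emph{wrong} way, giving $h(u(i)\mid u^{i-1})\le h(u_G(i)\mid u_G^{i-1})$; consequently the two differential entropies cannot be compared term by term and a genuinely joint argument is unavoidable. My plan is to exploit the structure granted by Assumption~\ref{ass1}: since $G$ is LTI with $G_{22}$ strictly proper, $y^{i-h}$ is an affine function of the \emph{Gaussian} exogenous pair $(x_0,w^{i-h})$ plus a term measurable with respect to $u^{i-1}$ (strict properness together with the delay forces the $u$-dependence of $y^{i-h}$ to sit at indices $\le i-h-1\le i-1$, so no present input leaks in). The only non-degenerate randomness that $y^{i-h}$ injects into $u(i)$ beyond $u^{i-1}$ is therefore carried by a Gaussian innovation, and I would apply the conditional entropy-power inequality to this Gaussian component — in the spirit of the worst-additive-noise argument underlying the framework of \cite{derpich2013fundamental} — to obtain $I(u(i);y^{i-h}\mid u^{i-1}) \ge I(u_G(i);y_G^{i-h}\mid u_G^{i-1})$ directly, without bounding the forward entropy in isolation. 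I expect the delicate point, and the heart of the proof, to be checking that the delayed indexing preserves both the causal ordering and this additive-Gaussian structure.

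Finally I would sum the per-instant inequalities, divide by $k$, and take $k\to\infty$. The SAWSS hypothesis (Definition~\ref{def1}) guarantees that $\{C_{u_1^n}\}$, $\{C_{y_1^n}\}$ and the relevant cross-covariances are asymptotically equivalent to those of the stationary limiting processes, so that $I_{\infty}^{(h)}(y_G\to u_G)$ is well defined and the Ces\`aro averages on the Gaussian side converge by the asymptotic-equivalence (Szeg\H{o}-type) limit theorem recalled in Section~\ref{sec2}. Passing to the limit preserves the inequality and yields ${I_{\infty}^{(h)}}(y\to{u})\ge{I_{\infty}^{(h)}}({y_G}\to{u_G})$.
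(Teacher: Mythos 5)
Your strategy hinges on the per-instant inequality $I(u(i);y^{i-h}\mid u^{i-1})\ \ge\ I(u_G(i);y_G^{i-h}\mid u_G^{i-1})$, and this is a genuine gap: that inequality is simply false for the class of coders admitted by Assumption~\ref{ass2}, so no EPI argument can rescue it. Concretely, let the decoder at a single instant $i$ ignore the channel output and side information and set $u(i)=f(u^{i-1})$ for a deterministic \emph{nonlinear} $f$. Then the left-hand side is zero, since $u(i)$ is measurable with respect to the conditioning variable; but on the Gaussian side the conditional law of $u_G(i)$ given $u_G^{i-1}$ has strictly positive variance (the linear predictor cannot reproduce a nonlinear $f$ exactly) and a generically nonzero partial cross-covariance with $y_G^{i-h}$, so the right-hand side is strictly positive. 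The inequality between the directed information rates holds only after summing over $i$ — deficits at some instants are compensated at others — so any correct proof must be a block argument, exactly as you suspected when you noticed the forward entropy term $h(u(i)\mid u^{i-1})$ running the wrong way. Your proposed conditional entropy-power-inequality fix does not close this hole: the EPI controls entropies of \emph{sums of independent} random variables, whereas $u(i)$ here is an arbitrary nonlinear causal function of $(y^{i-h},u^{i-1},\eta_d^{i})$; the Gaussian innovation that $y^{i-h}$ carries beyond $u^{i-1}$ enters $u(i)$ through no additive structure, so there is no channel to which a worst-additive-noise or EPI bound applies. (Your treatment of the subtracted term via the linear-MMSE/maximum-entropy bound is correct, but it is moot once the per-instant program collapses.)

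The paper's proof works at the block level and leans on the Gaussianity of the \emph{exogenous} variables rather than on any entropy-power argument. The key identity, step (a) of (\ref{eq7}) (Lemma B.4 of \cite{silva2013characterization}, adapted to the delayed channel via Theorem 1 of \cite{derpich2013fundamental}), converts the whole sum into a single mutual information against the primitive randomness: $\sum_{i=0}^{k-1} I(u(i);y^{i-h}\mid u^{i-1}) = I(x(0),w^{k-1};u^{k-1})$. Since $(x(0),w^{k-1})$ is genuinely Gaussian by Assumption~\ref{ass1}, Lemma B.1 of \cite{silva2013characterization} applies on the block and gives $I(x(0),w^{k-1};u^{k-1})\ \ge\ I(x(0),w^{k-1};u_G^{k-1})$ — this is where the Gaussian comparison happens, with one argument exactly Gaussian, sidestepping the per-instant obstruction entirely. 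The remainder of the proof, steps (c)--(i), is bookkeeping with \emph{equalities}: chain rules together with the two Markov chains of Theorem B.3 in \cite{silva2013characterization} (namely $u_G(i)-\bigl(u_G^{i-1},x(0),w^i\bigr)-w_{i+1}^{k-1}$ and $u_G(i)-\bigl(u_G^{i-1},y_G^{i-h}\bigr)-\bigl(x(0),w^{i-1}\bigr)$, the latter using that $y_G^{i}$ is a deterministic function of $(u_G^{i-1},x(0),w^{i})$) re-expand the Gaussian block mutual information back into exactly $\sum_{i=0}^{k-1} I(u_G(i);y_G^{i-h}\mid u_G^{i-1})$; dividing by $k$ and letting $k\to\infty$ finishes. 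If you want to repair your write-up, replace the per-instant comparison by this sum-to-block-and-back route; your closing limit passage then goes through without needing the SAWSS machinery, which the lemma does not assume.
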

\begin{proof}
	The following inequalities and identities will justify the claim: 
	\begin{IEEEeqnarray}{rl}\label{eq7}
		\begin{split}
		&\Sigma_{i=0}^{k-1}{I({{u}(i);y^{i-h}}\mid{{u^{i-1}}})}\myeqa{I(x(0),{{w}^{k-1}};{u}^{k-1})}\\
		&\myeqb{I(x(0),{{w}^{k-1}};{u}_{G}^{k-1})}\\
		&\myeqc\Sigma_{i=0}^{k-1}{I(x(0),{{w}^{k-1}};{{u}_{G}}(i)\mid{u_{G}^{i-1}})}\\
		&\myeqd\Sigma_{i=0}^{k-1}{I(x(0),{{w}^{i}};{{u}_{G}}(i)\mid{u_{G}^{i-1}})}\\
		&\myeqe\Sigma_{i=0}^{k-1}{[I(x(0),{{w}^{i}},{{y}_{G}^{i-h}};{{u}_{G}}(i)\mid{u_{G}^{i-1}})}\\
		&\negmedspace {}\qquad\qquad\qquad-{I({{y}_{G}^{i-h}};{{u}_{G}}(i)\mid{u_{G}^{i-1}},x(0),{{w}^{i}})]}\\
		&\myeqf\Sigma_{i=0}^{k-1}{[I(x(0),{{w}^{i-1}},{{y}_{G}^{i-h}};{{u}_{G}}(i)\mid{u_{G}^{i-1}})}\\
	    &\negmedspace {}\qquad\qquad\qquad-{I({{y}_{G}^{i-h}};{{u}_{G}}(i)\mid{u_{G}^{i-1}},x(0),{{w}^{i}})]}\\
		&\myeqg\Sigma_{i=0}^{k-1}{I(x(0),{{w}^{i-1}},{{y}_{G}^{i-h}};{{u}_{G}}(i)\mid{u_{G}^{i-1}})}\\
		&\myeqh\Sigma_{i=0}^{k-1}{I({{y}_{G}^{i-h}};{{u}_{G}}(i)\mid{u_{G}^{i-1}})}\\
		&\negmedspace {}\qquad\qquad\qquad+{I(x(0),{{w}^{i-1}};{{u}_{G}}(i)\mid{u_{G}^{i-1}},{{y}_{G}^{i-h}})}\\
		&\myeqi\Sigma_{i=0}^{k-1}{I({{y}_{G}^{i-h}};{{u}_{G}}(i)\mid{u_{G}^{i-1}})},
	\end{split}
	\end{IEEEeqnarray}
	where (a) is obtained by a slight modification in Lemma B.4 of \cite{silva2013characterization} based on the result of Theorem 1 in \cite{derpich2013fundamental}. Lemma B.1 in \cite{silva2013characterization} will give (b) because as mentioned in Assumption \ref{ass1}, $x_0$ and $w^k$ are jointly Gaussian.
	Moreover, (c) is a result of the chain rule applied on the conditional mutual information. (d) is concluded because regarding (52b) in Theorem B.3 of \cite{silva2013characterization},  which holds for the considered NCS in this paper, the Markov chain  ${{u}_{G}}(i)-{u_{G}^{i-1}},x(0),{w^i}-{{w}_{i+1}^{k-1}}$ holds . Furthermore, the chain rule of conditional mutual information gives (e), and (f) is caused by a property of mutual information. (g) is concluded according to Assumption \ref{ass1} and ${{y}_{G}^{i}}$ being a deterministic function of ${u_{G}^{i-1}},x(0)$ and ${{w}^{i}}$. The chain rule will yield (h). Finally, (i) is concluded since regarding (52a) in Theorem B.3 of \cite{silva2013characterization} and Assumption \ref{ass1}, the validity of the Markov chain ${{u}_{G}}(i)- {u_{G}^{i-1}},{{y}_{G}^{i-h}}-x(0),{{w}^{i-1}}$ is verified.   The proof is complete.
\end{proof}

What follows will relate the directed information from $y_G$ to $u_G$ to their corresponding power spectral densities:
\begin{Lemma}\label{lemma42}
Consider $y$ and $u$ as jointly Gaussian AWSS processes. Moreover, suppose that $u$ is SAWSS with $|{{{\lambda}_{min}}(C_{{u}_{1}^{n}})}|\geq{\mu}$, $\forall{n}\in{\mathbb{N}}$ where $\mu>0$. Then the following can be obtained:  
\begin{equation}\label{eq8}
	{{I_{\infty}^{(h)}}(y\to{u})}=\frac{1}{4\pi}\int_{-\pi}^{\pi} \log\big(\frac{{S_{\check{u}}}(e^{j\omega})}{{\sigma}_{\psi}^{2}}\big)d\omega,
\end{equation}	
in which $\psi$ is a Gaussian AWSS process with independent samples defined as:
	\begin{figure}[thpb]
		\centering
		\includegraphics[width=6cm,height=3.5cm]{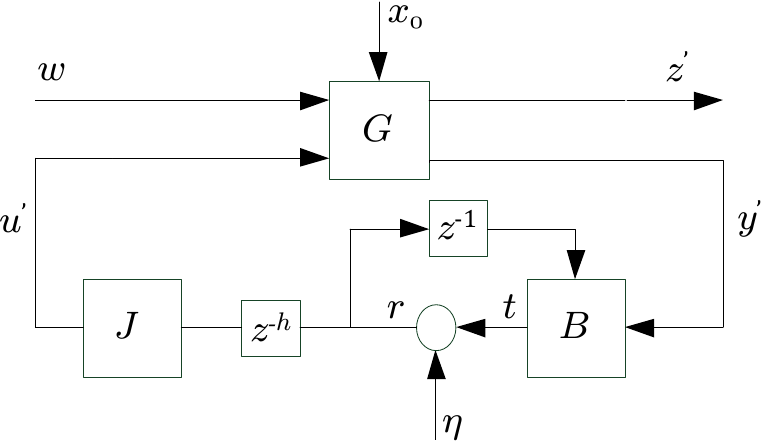}
		\caption{Auxiliary LTI NCS}
		\label{fig2}
	\end{figure}
\begin{equation}\label{eq9}
\psi(k)\triangleq{u(k)-\tilde{u}(k)},
\tilde{u}(k)\triangleq{E[u(k)\mid{{y^{k-h}},{u^{k-1}}}]}.
\end{equation} 
Moreover, ${S_{\check{u}}}$ represents the steady-state power spectral density of $u$.
 \end{Lemma}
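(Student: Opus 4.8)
The plan is to re-express the directed information rate from Theorem~\ref{th1} in terms of Gaussian conditional variances and then to invoke the Szeg\H{o}--Kolmogorov spectral machinery together with the SAWSS hypothesis. First I would write each summand as a difference of conditional differential entropies,
\[
I(u(i);y^{i-h}\mid u^{i-1})=h(u(i)\mid u^{i-1})-h(u(i)\mid u^{i-1},y^{i-h}),
\]
and, using that $(y,u)$ are jointly Gaussian so that every conditional distribution is Gaussian, replace each entropy by the corresponding log-variance to obtain
\[
I(u(i);y^{i-h}\mid u^{i-1})=\tfrac12\log\frac{\mathrm{Var}(u(i)\mid u^{i-1})}{\mathrm{Var}(u(i)\mid u^{i-1},y^{i-h})}.
\]
The denominator is exactly the minimum mean-square error of estimating $u(i)$ from $(y^{i-h},u^{i-1})$, i.e.\ the variance of $\psi(i)=u(i)-\tilde u(i)$ from \eqref{eq9}; hence $\mathrm{Var}(u(i)\mid u^{i-1},y^{i-h})=\sigma_{\psi}^{2}(i)$.

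Next I would establish that $\{\psi(i)\}$ has uncorrelated --- and therefore, being jointly Gaussian, independent --- samples, so that $\sigma_\psi^2(i)\to\sigma_\psi^2$ as anticipated in \eqref{eq9}. By the orthogonality principle, $\psi(i)$ is orthogonal to every measurable function of $(u^{i-1},y^{i-h})$; since for $j<i$ the variable $\psi(j)$ is a function of $(u^{j},y^{j-h})$, which is contained in $(u^{i-1},y^{i-h})$, it follows that $\mathcal{E}[\psi(i)\psi(j)]=0$ for all $j\neq i$, which is the asserted whiteness. For the numerator I would sum over $i$ and apply the entropy chain rule, collapsing the telescoping sum to a single block entropy,
\[
\sum_{i=0}^{k-1}h(u(i)\mid u^{i-1})=h(u^{k-1})=\tfrac12\log\big((2\pi e)^{k}\det C_{u_1^{k}}\big).
\]

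Finally, dividing by $k$ and letting $k\to\infty$, I would evaluate the two Ces\`aro limits separately. The innovation term converges to $\tfrac12\log(2\pi e\,\sigma_\psi^2)$ directly. For the block-entropy term, the SAWSS hypothesis (Definition~\ref{def1}) renders $\{C_{u_1^{n}}\}$ asymptotically equivalent to the Toeplitz sequence whose symbol is $S_{\check u}$, while the bound $|\lambda_{\min}(C_{u_1^n})|\ge\mu>0$ keeps the eigenvalues away from zero and guarantees that $\log S_{\check u}$ is integrable; Szeg\H{o}'s theorem then gives $\tfrac1k\log\det C_{u_1^{k}}\to\tfrac1{2\pi}\int_{-\pi}^{\pi}\log S_{\check u}(e^{j\omega})\,d\omega$. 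Subtracting the two limits cancels the $\tfrac12\log(2\pi e)$ terms, and writing $\tfrac12\log\sigma_\psi^2=\tfrac1{4\pi}\int_{-\pi}^{\pi}\log\sigma_\psi^2\,d\omega$ lets the difference collapse to the claimed integral of $\log(S_{\check u}/\sigma_\psi^2)$. I expect the principal obstacle to be the rigorous use of Szeg\H{o}'s theorem: one must confirm that the SAWSS notion of asymptotic equivalence is precisely the condition under which the normalized log-determinant converges to the spectral integral, and that the uniform lower bound $\mu$ prevents the limit from degenerating so that every quantity stays finite.
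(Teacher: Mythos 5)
Your proposal retraces the paper's own route: the same entropy decomposition $I(u(i);y^{i-h}\mid u^{i-1})=h(u(i)\mid u^{i-1})-h(u(i)\mid y^{i-h},u^{i-1})$, the same identification of the second term with the innovation $\psi(i)$ of (\ref{eq9}), the same telescoping of the first term into $h(u^{k-1})$ via the chain rule, and the same Szeg\H{o}-type limit $\tfrac{1}{k}\log\det C_{u_{1}^{k}}\to\tfrac{1}{2\pi}\int_{-\pi}^{\pi}\log S_{\check{u}}(e^{j\omega})\,d\omega$ justified by the SAWSS hypothesis together with $|\lambda_{min}(C_{u_{1}^{n}})|\geq\mu>0$. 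The paper routes these steps through citations (translation invariance and independence are Properties 2--3 of \cite{silva2013characterization}, and the spectral limit is its Lemma B.5), whereas you unpack them; in particular your orthogonality-principle argument for the whiteness of $\psi$ --- $\psi(j)$ is measurable with respect to $(u^{j},y^{j-h})\subseteq(u^{i-1},y^{i-h})$ for $j<i$, hence uncorrelated with $\psi(i)$, hence independent by joint Gaussianity --- is a correct and welcome elaboration of what the paper delegates to references, and your passage through conditional variances rather than entropies is an equivalent reformulation in the Gaussian case.

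There is one misattributed inference, though the conclusion it targets is true under the lemma's hypotheses: whiteness of $\psi$ does \emph{not} imply $\sigma_{\psi}^{2}(i)\to\sigma_{\psi}^{2}$, so your ``so that'' at that point is a non sequitur. Convergence of the innovation variance is exactly the AWSS-ness of $\psi$, and it must be extracted from the joint Gaussianity and joint AWSS-ness of $(u,y)$; the paper secures this at the outset by a slight modification of Theorem 2.4 in \cite{porat1994digital}, which delivers that $\psi$ is Gaussian and AWSS before any limits are taken. Without that ingredient, your final Ces\`aro limit $\lim_{k\to\infty}\tfrac{1}{k}\sum_{i=0}^{k-1}h(\psi(i))=\tfrac{1}{2}\log(2\pi e\,\sigma_{\psi}^{2})$ is unsupported, since the per-sample entropies $\tfrac{1}{2}\log(2\pi e\,\sigma_{\psi}^{2}(i))$ need not converge on whiteness grounds alone. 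The remaining caveats you flag are handled as you expect: the eigenvalue bound $\mu>0$ makes $C_{u_{1}^{k}}$ nonsingular (so $h(u^{k-1})=\tfrac{1}{2}\log((2\pi e)^{k}\det C_{u_{1}^{k}})$ is legitimate), keeps the logarithm continuous on the relevant spectral interval so that Gray's asymptotic-equivalence machinery applies, and guarantees integrability of $\log S_{\check{u}}$ --- which is precisely the content of Lemma B.5 of \cite{silva2013characterization} that the paper invokes.
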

\begin{proof}
Having Gaussianity and joint AWSS-ness of $(u,y)$ in mind and based on Theorem 2.4 in \cite{porat1994digital} with a little modification, we can conclude that $\psi$ is Gaussian and AWSS as well. We start by the following equalities:	
\begin{IEEEeqnarray}{rl}\label{eq10}
\begin{split}	
&{I({{u}(i);y^{i-h}}\mid{{u^{i-1}}})}\myeqj{h(u(i)\mid{u^{i-1}})-h(u(i)\mid{y^{i-h},u^{i-1}})}\\
&\myeqk{h(u(i)\mid{u^{i-1}})-h(\psi(i)+\tilde{u}(i)\mid{y^{i-h},u^{i-1}})}\\
&\myeql{h(u(i)\mid{u^{i-1}})-h(\psi(i)\mid{y^{i-h},u^{i-1}})}\\
&\myeqm{h(u(i)\mid{u^{i-1}})-h(\psi(i))},
\end{split}
\end{IEEEeqnarray}
where (j) comes from the definition of mutual information and (k) from the definition of $\psi$. Furthermore, (l) can be concluded based on Property 2 in \cite{silva2013characterization} and (\ref{eq9}) where $\tilde{u}(i)$ is a deterministic function of ${y^{i-h}}$ and ${u^{i-1}}$. Due to the independence between $\psi(i)$ and $({y^{i-h}},{u^{i-1}})$, and Property 3 in \cite{silva2013characterization}, (m) is obtained. So the directed information rate can therefore be rewritten as follows:
\begin{IEEEeqnarray}{rl}\label{eq11}
	\begin{split}
&{{I_{\infty}^{(h)}}(y\to{u})}=\lim_{k \to \infty}
\frac{1}{k}
\sum_{i=0}^{k-1}\{{h(u(i)\mid{u^{i-1}})-h(\psi(i))}\}\\
&\myeqn\lim_{k \to \infty}\frac{1}{k}{h(u^{k-1})}-\lim_{k \to \infty}{h(\psi(k))}\\
&\myeqo{\frac{1}{4\pi}\int_{-\pi}^{\pi} \log(2\pi{e}{S_{\check{u}}}(e^{j\omega}))d\omega}- \frac{1}{2}\log({2\pi{e}{{\sigma}_{\psi}^{2}}}),
	\end{split}
\end{IEEEeqnarray}
where (n) follows from the chain rule of the differential entropy and $\psi(k)$ being independent of ${\psi}^{k-1}$. Since the process $u$ is SAWSS with $|{{{\lambda}_{min}}(C_{{u}_{1}^{n}})}|\geq{\mu}$, $\forall{n}\in{\mathbb{N}}$ for some $\mu>0$, Lemma B.5 in \cite{silva2013characterization} will approve the validity of the leftmost term in (o). The rightmost term is self-explanatory because $\psi$ is Gaussian and AWSS.
\end{proof}

It follows from Theorem~\ref{th1}, Lemma~\ref{lemma41} and Lemma~\ref{lemma42} that the rate-performance pair yielded by any encoder-decoder scheme  which renders the NCS SAWSS is attainable with a lower rate by a coding scheme comprised of LTI filters and an AWGN noise source. Such a scheme is depicted in Fig. 2.

The NCS of Fig.~\ref{fig2} is defined under the same conditions (Assumption \ref{ass1}) as the main system in Fig.~\ref{fig1} except for one thing; the arbitrary mappings are replaced by proper LTI filters $B$ and $J$. Moreover, the communication channel is a delayed AWGN channel with noiseless one-sample-delayed feedback. The dynamics of this auxiliary coding scheme can be summarized as follows:	
\begin{equation}\label{eq12}
	u'=J{z^{-h}}r,\quad r=t+\eta,\quad t=B\mathrm{diag}\{{z^{-1}},1\} \begin{bmatrix}
		r\\
		y'
	\end{bmatrix},
\end{equation}
in which $\eta$ is the AWGN with zero mean and variance ${{\sigma}_{\eta}^{2}}$. This noise is assumed to be independent of $(x_0,w)$. Additionally, we suppose that the initial state of $B$, $J$, and the delay are deterministic.
\begin{Theorem}\label{thm2}
	For the NCS depicted in Fig.~\ref{fig1} and satisfying Assumptions~\ref{ass1} and \ref{ass2}, $\mathcal{R}(D)$ is lower-bounded as follows if ${D}\in({D_{inf}}(h),\infty)$: 
	\begin{equation}\label{eq13}
	{{\mathcal{R}}(D)}\geq{{{\vartheta}'}_u}(D)\triangleq\inf_{{{\sigma}_{z'}^{2}}\leq{D}}{\frac{1}{4\pi}\int_{-\pi}^{\pi} \log\big(\frac{{S_{u'}}(e^{j\omega})}{{\sigma}_{\eta}^{2}}\big)},
	\end{equation}
where the feasible set for the optimization problem defining ${{{\vartheta}'}_u}(D)$ is all LTI filters $B$ and the noise $\eta$ with ${{\sigma}_{{\eta}}^{2}}\in{{\mathbb{R}}_{+}}$ rendering the feedback loop of Fig.~\ref{fig2} internally stable and well-posed when $J=1$. In these expressions, ${{\sigma}_{z'}^{2}}$ and ${S_{u'}}$ denote the steady-state variance of $z'$ and the steady-state power spectral density of $u'$ in Fig.~\ref{fig2}, respectively.  	
\end{Theorem}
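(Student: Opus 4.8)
The plan is to chain the three preceding results into a single inequality and then exhibit an explicit realization of the Gaussian second-order behavior inside the auxiliary loop of Fig.~\ref{fig2}. Fix any admissible encoder-decoder pair that renders the NCS SAWSS and achieves $\sigma_z^2\le D$. By Theorem~\ref{th1}, $\mathcal{R}\ge I_\infty^{(h)}(y\to u)$; by Lemma~\ref{lemma41}, $I_\infty^{(h)}(y\to u)\ge I_\infty^{(h)}(y_G\to u_G)$ for the jointly Gaussian counterpart $(y_G,u_G)$ sharing the same first- and second-order moments; and by Lemma~\ref{lemma42} (whose hypotheses hold because SAWSS-ness and the eigenvalue bound $|\lambda_{min}(C_{u_1^n})|\ge\mu$ transfer to the Gaussian counterpart) this last quantity equals $\tfrac{1}{4\pi}\int_{-\pi}^{\pi}\log(S_{\check{u}}(e^{j\omega})/\sigma_{\psi}^2)\,d\omega$. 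It therefore suffices to prove that this spectral integral is itself $\ge\vartheta'_u(D)$, i.e.\ that it is attained at a \emph{feasible} point of the optimization defining (\ref{eq13}).

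The realization step is the heart of the argument. Writing $u_G(k)=\tilde{u}_G(k)+\psi_G(k)$ with the predictor $\tilde{u}_G(k)=E[u_G(k)\mid y_G^{k-h},u_G^{k-1}]$ as in (\ref{eq9}), the quantity $\tilde{u}_G$ is a linear functional of the $h$-delayed output history and of the strict past of $u_G$, hence realizable by a proper LTI filter. I would set $J=1$ and identify the channel noise with the whitened innovations, $\eta(k)=\psi_G(k+h)$ so that $\sigma_\eta^2=\sigma_\psi^2$. Tracing the loop equations (\ref{eq12}) with $J=1$ gives $u'(k)=r(k-h)=t(k-h)+\eta(k-h)$, where $t=B\,\mathrm{diag}\{z^{-1},1\}[r\ \ y']^T$ makes $t(k)$ a function of $r^{k-1}$ and $y'^k$; since $r(j)=u'(j+h)$, the term $t(k-h)$ depends only on $u'^{k-1}$ and $y'^{k-h}$. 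Choosing $B$ to implement the predictor gains then reproduces exactly the innovations decomposition $u'(k)=\tilde{u}'(k)+\eta(k-h)$, so the second-order statistics of $(y',u',z')$ coincide with those of $(y_G,u_G,z_G)$; in particular $S_{u'}=S_{\check{u}}$ and $\sigma_\eta^2=\sigma_\psi^2$, whence the objective of (\ref{eq13}) evaluated at $(B,\eta,J{=}1)$ equals the spectral integral above.

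It remains to check feasibility and conclude. Because the original loop is SAWSS and $G$ is proper with a strictly proper $u\to y$ map (Assumption~\ref{ass1}), the constructed $B$ is proper and the loop of Fig.~\ref{fig2} is well-posed and internally stable with $J=1$; moreover the moment-matching gives $\sigma_{z'}^2=\sigma_{z_G}^2=\sigma_z^2\le D$, so the constructed triple lies in the feasible set of (\ref{eq13}). Consequently the spectral integral is no smaller than the infimum $\vartheta'_u(D)$, giving $\mathcal{R}\ge\vartheta'_u(D)$ for the fixed pair. Taking the infimum over all admissible encoder-decoders with $\sigma_z^2\le D$ then yields $\mathcal{R}(D)\ge\vartheta'_u(D)$, which is (\ref{eq13}); feasibility of the overall problem for $D\in(D_{inf}(h),\infty)$ has already been granted.

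The main obstacle I anticipate is precisely the realization/feasibility step: establishing, by spectral factorization of the joint spectrum of $(y_G,u_G)$, that the predictor admits a \emph{proper} realization $B$ that keeps the delayed loop internally stable and well-posed, with the eigenvalue lower bound of Lemma~\ref{lemma42} guaranteeing $\sigma_\psi^2>0$ so the logarithm is integrable. Note that the restriction $J=1$ built into the definition of $\vartheta'_u(D)$ costs nothing for a lower bound, since a stable and stably-invertible post-filter $J\in\mathcal{U}_\infty$ applied to $u$ leaves $I_\infty^{(h)}(y\to u)$ unchanged and can be folded into the plant-side description; the delicate bookkeeping is carrying this invariance and the stability/properness verification through the $z^{-h}$ delay, which is where the technical effort concentrates.
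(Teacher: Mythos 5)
Your proposal is correct and follows essentially the same route as the paper: chain Theorem~\ref{th1}, Lemma~\ref{lemma41} and Lemma~\ref{lemma42} to bound $\mathcal{R}$ by the spectral integral for the Gaussian counterpart, then realize that counterpart's second-order statistics in the loop of Fig.~\ref{fig2} with $J=1$, $B$ implementing the (linear, by Gaussianity) predictor $E[u_G(k)\mid y_G^{k-h},u_G^{k-1}]$, and $\sigma_\eta^2=\sigma_{\psi}^2$, so the integral is attained at a feasible point of (\ref{eq13}). The only cosmetic difference is that the paper justifies the LTI steady-state realization of the predictor via the causal linear maps $L_k$ in (\ref{eq15})--(\ref{eq16}) and asymptotic equivalence of $\{Q_k\},\{P_k\}$ to lower-triangular Toeplitz sequences, whereas you invoke spectral factorization directly; the stability/properness details you flag as the remaining obstacle are asserted at a comparable level of brevity in the paper itself.
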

\begin{proof}
Since $D>{{D}_{inf}}(h)$, there is at least one pair, say $\hat{{E}}$ and $\hat{{D}}$,  satisfying Assumption~\ref{ass2}, rendering the NCS of Fig.~\ref{fig1} SAWSS and producing $\hat{z}$, $\hat{y}$ and $\hat{u}$ where ${{{\sigma}_{\hat{z}}^{2}}\leq{D}}$ holds and
\begin{IEEEeqnarray}{rcl}\label{eq14}
\begin{split}
	\mathcal{R}
	&\geq{{I_{\infty}^{(h)}}(\hat{y}\to\hat{u})}
	\geq{I_{\infty}^{(h)}}({\hat{y}_G}\to\hat{u}_G)\\
	&=\frac{1}{4\pi}\int_{-\pi}^{\pi} \log\big(\frac{{S_{\breve{u}}}(e^{j\omega})}{{\sigma}_{{\hat{\psi}}_G}^{2}}\big)d\omega
\end{split}	
\end{IEEEeqnarray}
can be concluded based on Theorem~\ref{th1}, if the conditions in Lemma~\ref{lemma41} and Lemma~\ref{lemma42} are met. It should be noted that ${S_{\breve{u}}}$ denotes the steady-state power spectral density of $\hat{u}_G$. A coding scheme comprised of linear filters with a unit-gain noisy channel and delay $h$, as follows, can generate $\hat{y}_G$ and $\hat{u}_G$ which satisfy those conditions and keep ${{\sigma}_{\hat{z}_G}^{2}}$ within $({D_{inf}}(h),\infty)$:
   	 \begin{equation}\label{eq15}
   	 	{\hat{u}_{G}}(k)={L_k}({\hat{y}}_{G}^{k-h},{\hat{u}}_{G}^{k-1})+{\hat{\psi}_{G}}(k-h),\qquad k\in{{\mathbb{N}}_0}
   	 \end{equation} 
  where ${\hat{\psi}_{G}}(k)$ represents a Gaussian noise with zero mean and independent of $({\hat{y}}_{G}^{k},{\hat{u}}_{G}^{k-1})$.  Regarding the causality and linearity of ${L_k}$, ${\hat{u}_{G}^{k}}$ can be written as follows:
  \begin{equation}\label{eq16}
  {\hat{u}_{G}^{k}}={Q_k}{\hat{\psi}_{G}^{k-h}}+{P_k}{\hat{y}}_{G}^{k-h},\qquad k\in{{\mathbb{N}}_0}.
  \end{equation} 
 According to the causality in (\ref{eq16}), joint SAWSS-ness of $({\hat{y}_G},\hat{u}_G)$ and transitivity of asymptotic equivalence for products and sum of the matrices noted in \cite{gray2006toeplitz}, the sequences $\{Q_k\}$ and $\{P_k\}$ are asymptotically equivalent to sequences of lower triangular Toeplitz matrices. Moreover, $L_k$ renders the NCS internally stable and well-posed. With all of this in mind, by setting $J=1$ and $B$ as a concatenation of  linear filters with the steady-state behaviour of $L_k$ in (\ref{eq15}) and considering a variance for $\eta$ equal to ${{\sigma}_{{\hat{\psi}}_G}^{2}}$, the system of Fig.~\ref{fig2} will be rendered well-posed and internally stable where  ${S}_{u'}={S}_{\breve{u}}$ and ${{\sigma}_{{\hat{z}}_{G}}^{2}}={{\sigma}_{{z'}}^{2}}$ are resulted.
   \begin{figure}[thpb]
   	\centering
   	\includegraphics[width=6cm,height=3.5cm]{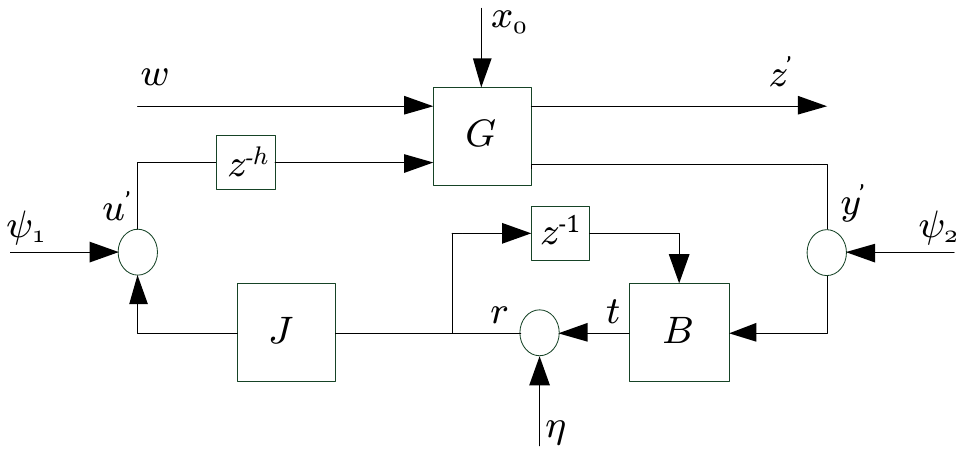}
   	\caption{Equivalent archtiectural viewpoint of internal stability}
   	\label{fig3}
   \end{figure}
 Then according to Lemma~\ref{lemma42}, the following can be concluded:        
 \begin{IEEEeqnarray}{rcl}\label{eq17}
 	\begin{split}
 {{I_{\infty}^{(h)}}(y'\to{u'})}&=\frac{1}{4\pi}\int_{-\pi}^{\pi} \log\big(\frac{{S_{u'}}(e^{j\omega})}{{\sigma}_{\eta}^{2}}\big)d\omega\\
 &=\frac{1}{4\pi}\int_{-\pi}^{\pi} \log\big(\frac{{S_{\breve{u}}}(e^{j\omega})}{{\sigma}_{{\hat{\psi}}_G}^{2}}\big)d\omega,
\end{split}	
\end{IEEEeqnarray}   
which completes the proof.
\end{proof}
\begin{Lemma}\label{lemma43}
Consider the LTI loop of Fig.~\ref{fig2} with fixed ${{\sigma}_{{\eta}}^{2}}\in{{\mathbb{R}}^{+}}$. Define ${\vartheta}_{r}^{'}$ as follows:
\begin{equation}\label{eq18}
{\vartheta}_{r}^{'}(B,J,{{\sigma}_{{\eta}}^{2}})\triangleq\frac{1}{4\pi}\int_{-\pi}^{\pi} \log\big(\frac{{S_{r}}(e^{j\omega})}{{\sigma}_{\eta}^{2}}\big),
\end{equation}
in which ${S_{r}}$ represents the steady-state power spectral density of $r$. Tehn for any $\rho>0$, upon the existence of the pair $(B,J)=({B_{1}},{J_{1}})$ making the system of Fig.~\ref{fig2} internally stable and well-posed, there exist another pair, comprised of the biproper filter ${J_{2}}$ and ${B_{2}}$, which renders the feedback loop of Fig.~\ref{fig2} internally stable and well-posed, preserves the the steady-state power spectral density of $z'$ and satisfies the following:
 \begin{IEEEeqnarray}{rl}\label{eq19}
 	\begin{split}
&{\vartheta}_{r}^{'}({B_{1}},{J_{1}},{{\sigma}_{{\eta}}^{2}})={\vartheta}_{r}^{'}({B_{2}},{J_{2}},{{\sigma}_{{\eta}}^{2}})\\
&={\frac{1}{2}}\log(1+{\frac{{\sigma}_{t}^{2}}{{\sigma}_{\eta}^{2}}}){{\mid}_{(B,J)=({B_2},{J_2})}´}-\rho
 	\end{split}	
 \end{IEEEeqnarray}   	
\end{Lemma}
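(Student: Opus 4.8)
The plan is to reduce both sides of the desired identity \eqref{eq19} to elementary functionals of the steady-state spectrum $S_r$ of the channel output, and then to realize the prescribed surplus $\rho$ by a biproper reshaping of that spectrum. First I would read off from the decoder relation $u'=Jz^{-h}r$ in (\ref{eq12}) that, on the unit circle where $|z^{-h}|=1$, one has $S_{u'}(e^{j\omega})=|J(e^{j\omega})|^{2}S_r(e^{j\omega})$, i.e. $S_r=|J|^{-2}S_{u'}$. Writing $\sigma_\psi^{2}\triangleq\exp\!\big(\tfrac{1}{2\pi}\int_{-\pi}^{\pi}\log S_r\,d\omega\big)$ for the geometric-mean (innovations) variance of $r$ and $\sigma_r^{2}=\tfrac{1}{2\pi}\int_{-\pi}^{\pi}S_r\,d\omega$ for its arithmetic-mean variance, and using $\sigma_r^{2}=\sigma_t^{2}+\sigma_\eta^{2}$ because $t\perp\eta$, I would record the two identities $\vartheta'_r(B,J,\sigma_\eta^{2})=\tfrac12\log(\sigma_\psi^{2}/\sigma_\eta^{2})$ and $\tfrac12\log(1+\sigma_t^{2}/\sigma_\eta^{2})=\tfrac12\log(\sigma_r^{2}/\sigma_\eta^{2})$. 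Their difference is $\tfrac12\log(\sigma_r^{2}/\sigma_\psi^{2})\ge 0$ by the arithmetic--geometric mean (Jensen) inequality, with equality precisely when $r$ is white. The crucial observation for what follows is that both functionals depend only on $|J(e^{j\omega})|$, so the phase and the pure delays of $J$ are free parameters.

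Next I would construct $(B_2,J_2)$ from $(B_1,J_1)$. Let $S_{u'}$ and $S_r^{(1)}=|J_1|^{-2}S_{u'}$ be the spectra realized by $(B_1,J_1)$. I would set $J_2=z^{d}MJ_1$, where the factor $z^{d}$ (with $d$ the relative degree of $J_1$) cancels the pure delays of $J_1$ so that $J_2$ is biproper, and $M\in\mathcal{U}_\infty$ is a rational, monic, minimum-phase biproper filter used to reshape the output spectrum into $S_r^{(2)}=|M|^{-2}S_r^{(1)}$. I would then choose $B_2$ as the matching compensation of $B_1$ so that the closed-loop map from $(x_0,w,\eta)$ to $u'$ — hence the steady-state spectrum of $z'$ — is left unchanged. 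This compensation is feasible because $z^{d}$ is accompanied by a delay of $r$ and $M$ is monic, so the resulting corrections feeding $\eta$ into $t$ are strictly proper and can therefore be synthesized through the one-sample-delayed feedback path of Fig.~\ref{fig2}. Since $|z^{d}|=1$ and $\tfrac{1}{2\pi}\int_{-\pi}^{\pi}\log|M|^{2}\,d\omega=0$ for a monic minimum-phase $M$, the geometric mean of $S_r^{(2)}$ equals that of $S_r^{(1)}$; this preserves $\vartheta'_r$ and yields the first equality in \eqref{eq19}.

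To attain the surplus exactly $\rho$, I would hold the geometric mean (equivalently $\vartheta'_r$) fixed and tune only the arithmetic mean $\tfrac{1}{2\pi}\int_{-\pi}^{\pi}|M|^{-2}S_r^{(1)}\,d\omega$ through the shape of $M$. Embedding $M$ in a one-parameter family $M_\alpha$ (each normalized monic, so $\vartheta'_r$ stays fixed) that interpolates between the exact whitening factor of $r^{(1)}$ (for which $S_r^{(2)}$ is flat and the gap is $0$) and increasingly colouring choices (for which the arithmetic-to-geometric ratio, and hence the gap, grows without bound), I would invoke continuity and the intermediate value theorem: the gap $\tfrac12\log(\sigma_r^{2}/\sigma_\psi^{2})$ depends continuously on $\alpha$ and sweeps all of $[0,\infty)$, so some $\alpha$ gives gap $=\rho$ for any prescribed $\rho>0$, which is the second equality in \eqref{eq19}. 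Internal stability and well-posedness are then inherited from $(B_1,J_1)$, since $z^{d}$, $M_\alpha$, $M_\alpha^{-1}$ and the compensating $B_2$ are all causal, stable, and biproper where required (the relevant factors lying in $\mathcal{U}_\infty$), leaving every closed-loop transfer function of Fig.~\ref{fig2} proper and stable.

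The step I expect to be the main obstacle is the joint consistency of this construction: exhibiting a compensating redesign $B_2$ that simultaneously preserves the $z'$ spectrum and internal well-posedness while $J$ is both made biproper and spectrally reshaped — in particular synthesizing the strictly-proper $\eta$-feedthrough corrections through the one-sample-delayed feedback — and verifying rigorously that the achievable gap is a continuous function of $\alpha$ that is surjective onto $(0,\infty)$ under the fixed-geometric-mean constraint. Establishing that the biproperness of $J_2$ is compatible with holding $\vartheta'_r$ fixed (which rests on the fact that delays do not affect $|J|$ on the unit circle) and that $B_2$ can always be taken stable and causal without disturbing well-posedness is the delicate bookkeeping at the heart of the argument.
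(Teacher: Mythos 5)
Your plan is essentially the paper's own proof: there, $J_2=z^{d_1}J_1V^{-1}$ with $V\in{\mathcal{U}}_\infty$ and $V(\infty)=1$ (your monic, minimum-phase $M$ playing the role of $V^{-1}$), the compensating redesign you flag as the main obstacle is written down explicitly as $B_{y2}=z^{-d_1}B_{y1}$ and $B_{r2}=z\bigl(1-(1-B_{r1}z^{-1})V^{-1}\bigr)$ --- proper precisely because $V(\infty)=1$, i.e.\ for the strict-properness reason you anticipated --- and internal stability, preservation of the $z'$ spectrum, and preservation of ${\vartheta}_{r}^{'}$ all follow at once from the identity $T_2=\mathrm{diag}\{z^{d_1}I,z^{d_1}I,V,z^{d_1}I\}\,T_1\,\mathrm{diag}\{I,z^{-d_1}I,z^{-d_1}I,z^{-d_1}I\}$ together with the zero log-integral of the normalized unit, matching your AM--GM gap analysis. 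Your intermediate-value tuning toward the whitening factor is likewise the paper's mechanism, realized concretely as the one-parameter family $V_\zeta$ built from the spectral factor $\Gamma_{r_1}$ of $S_{r_1}$ with any unit-circle zeros $\delta_i$ replaced by $\zeta\delta_i$, $\zeta\in(0,1)$ --- exactly the admissibility fix your family would need near the exact-whitening endpoint, where the true whitening filter can fall outside ${\mathcal{U}}_\infty$.
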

\begin{proof}
It is well-known that the system of Fig.~\ref{fig2} is well-posed and internally stable if and only if the transfer function $T$ from ${[\eta,w,{\psi}_1,{\psi}_2]^T}$ to ${[z',y',r,u']^T}$ in Fig.~\ref{fig3} belongs to ${\mathcal{R}\mathcal{H}}_{\infty}$.
$T$ is calculated as (\ref{eq27}) (at the top of the next page) where
\begin{equation}\label{eq40}
M={(1-{B_{r}}{z^{-1}}-{G_{22}}J{z^{-h}}{B_y})}^{-1}.
\end{equation}
 By $T_{i}$ and $r_i$, we refer to the transfer-function matrix $T$ and signal $r$ when $B$ and $J$ are set such that $(B,J)=({B_{i}},{J_{i}}),i\in\{1,2\}$. Moreover,  ${B}_{yi}$ and ${B}_{ri}$ represent elements of $B$ ($B=[B_r\quad B_y]$) in the situation where $B={B}_{i}$. Now, consider the following set of filters:
 \begin{IEEEeqnarray}{rl}\label{eq20}
 	\begin{split}
 		{J}_{2}={z^{d_{1}}}{J_{1}}{V^{-1}},\quad {B_{y2}}={z^{{-d_{1}}}}{B_{y1}}\\
 		{B_{r2}}=z(1-(1-{B_{r1}}{z^{-1}}){V^{-1}}),
 	\end{split}	
 \end{IEEEeqnarray}
in which $d_1$ indicates the relative degree of $J_{1}$ and $V\in{{\mathcal{U}}_\infty}$ is chosen in such a way that $V(\infty)=1$. 
Consequently, $J_{2}$ is biproper and $T_{2}$ can be written as follows:
 \begin{IEEEeqnarray}{rl}\label{eq21}
 	\begin{split}
	T_{2}=\mathrm{diag}\{{z^{d_1}}I,{z^{d_1}}I,&V,{z^{d_1}}I\}\times{T_{1}}\times\\
	&{\mathrm{diag}\{I,{z^{-d_1}}I,{z^{-d_1}}I,{z^{-d_1}}I\}}.
 	\end{split}	
 \end{IEEEeqnarray}
So regarding the definition of $d_1$ and properties of $V$, ${T_{2}}\in{\mathcal{R}\mathcal{H}}_{\infty}$ if and only if  ${T_{1}}\in{\mathcal{R}\mathcal{H}}_{\infty}$. Moreover, based on the same argument, using $({B_{2}},{J_{2}})$ would give the same power spectral density for $z'$ as for the case where $({B_{1}},{J_{1}})$ is utilized. We need ${{\Gamma}_{r_{1}}}$ in  $S_{r_{1}}=|{{{\Gamma}_{r_{1}}}}|^{2}$ to be stable, biproper and with all zeros in $\{z\in{\mathbb{C}}:|z|\leq{1}\}$. Let ${\delta_1},\dots,{\delta_{m}}$ represent the zeros of ${{\Gamma}_{r_1}}$  lying on the unit circle. Now for $\zeta\in{(0,1)}$ we define the following: 
  \begin{IEEEeqnarray}{rl}\label{eq22}
  	\begin{split}
 {{{\hat{\Gamma}}}_{r_{1}}}&\triangleq{{{\Gamma}_{r_1}}}{\prod_{i=1}^{m}}z{{(z-{\delta_i})}^{-1}}\\
 {V_\zeta}&\triangleq{({{{\hat{\Gamma}}}_{r_{1}}})}^{-1}{{{{\hat{\Gamma}}}_{r_{1}}}(\infty) }{\prod_{i=1}^{m}}z{{(z-\zeta{\delta_i})}^{-1}}.
 	\end{split}	
 \end{IEEEeqnarray}
   \begin{figure*}[t]
   	\begin{equation}\label{eq27}
   	T=
   	\begin{bmatrix}
   	G_{12}J{z^{-h}}M&G_{11}+G_{12}J{z^{-h}}{B_y}MG_{21}&G_{12}{z^{-h}}(1-{B_{r}}{z^{-1}})M&{G_{12}}J{z^{-h}}{B_y}M\\
   	G_{22}J{z^{-h}}M&G_{21}(1-{B_{r}}{z^{-1}})M&G_{22}{z^{-h}}(1-{B_{r}}{z^{-1}})M&{G_{22}}J{z^{-h}}{B_y}M\\
   	M&{G_{21}}{B_y}M&{G_{22}}{z^{-h}}{B_y}M&{B_y}M\\
   	JM&{G_{21}}J{B_y}M&(1-{B_{r}}{z^{-1}})M&J{B_y}M
   	\end{bmatrix}
   	\end{equation}	
   \end{figure*} 
       Hence, ${{V}_{\zeta}}\in{{\mathcal{U}}_\infty}$ and ${V}_{\zeta}(\infty)=1$ can be deduced for every $\zeta\in{(0,1)}$. By following the same procedure as for the proof of Theorem 5.2 in \cite{silva2011framework}, the existence of $\zeta\in{(0,1)}$ will be shown in such a way that for any ${\rho}>0$, setting $V={V}_{\zeta}$ will give a pair $({B_{2}},{J_{2}})$ that
       satisfies (\ref{eq19}).	
\end{proof}
\newtheorem{Corollary}{Corollary}[section]
\begin{Corollary}\label{cor1}
	If Assumptions~\ref{ass1} and \ref{ass2} hold for the NCS of Fig.~\ref{fig1} and ${D}\in({D_{inf}}(h),\infty)$, then
	\begin{equation}\label{eq23}
	{{\mathcal{R}}(D)}\geq{\frac{1}{2}}\log(1+{{\varphi'}(D)}), {{\varphi'}(D)}\triangleq{{\inf_{{{\sigma}_{z'}^{2}}\leq{D}}}{\frac{{\sigma}_{t}^{2}}{{\sigma}_{\eta}^{2}}}},  
	\end{equation}
where the optimization is done over all LTI filter pairs $(B,J)$ and the noise variance ${{\sigma}_{{\eta}}^{2}}\in{{\mathbb{R}}^{+}}$ making the system in Fig.~\ref{fig2} internally stable and well-posed. Moreover, ${{\sigma}_{t}^{2}}$ and ${{\sigma}_{z'}^{2}}$ represent the steady-state variances of $t$ and $z'$ , respectively.
\end{Corollary}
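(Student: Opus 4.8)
The plan is to chain Theorem~\ref{thm2} and Lemma~\ref{lemma43} together: Theorem~\ref{thm2} already hands us the spectral (integral) rate bound $\mathcal{R}(D)\geq{\vartheta'_u}(D)$, and Lemma~\ref{lemma43} is precisely the device that converts such an integral into the scalar SNR quantity $\frac12\log(1+\varphi'(D))$ appearing in \eqref{eq23}. So the corollary is essentially a bookkeeping deduction once those two ingredients are in hand.

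First I would record the reduction hidden in Theorem~\ref{thm2}: the infimum defining ${\vartheta'_u}(D)$ is taken with $J=1$, and in that case the channel dynamics \eqref{eq12} collapse to $u'=z^{-h}r$. The transfer function from $r$ to $u'$ is then the pure delay $z^{-h}$, whose magnitude on the unit circle is identically one, so $S_{u'}(e^{j\omega})=|e^{-j\omega h}|^{2}S_{r}(e^{j\omega})=S_{r}(e^{j\omega})$. Hence the integrand of \eqref{eq13} coincides with that of \eqref{eq18}, and ${\vartheta'_u}(D)=\inf{\vartheta'_r}(B,1,{\sigma}_{\eta}^{2})$, the infimum running over all $(B,{\sigma}_{\eta}^{2})$ that render Fig.~\ref{fig2} internally stable and well-posed with $J=1$ and ${\sigma}_{z'}^{2}\leq D$.

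Next I would fix one such feasible triple $(B,1,{\sigma}_{\eta}^{2})$ together with an arbitrary $\rho>0$ and invoke Lemma~\ref{lemma43} with $({B_{1}},{J_{1}})=(B,1)$; since $J_{1}=1$ has relative degree $d_{1}=0$, the pair it produces has $J_{2}=V^{-1}$ biproper. Lemma~\ref{lemma43} guarantees that $({B_{2}},{J_{2}})$ is internally stabilizing and well-posed, leaves the steady-state spectrum of $z'$ (hence the constraint ${\sigma}_{z'}^{2}\leq D$) unchanged, and satisfies ${\vartheta'_r}(B,1,{\sigma}_{\eta}^{2})=\frac12\log(1+{\sigma}_{t}^{2}/{\sigma}_{\eta}^{2})\big|_{({B_2},{J_2})}-\rho$. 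Because $({B_{2}},{J_{2}},{\sigma}_{\eta}^{2})$ is an admissible point for the optimization defining $\varphi'(D)$ and meets ${\sigma}_{z'}^{2}\leq D$, its SNR obeys ${\sigma}_{t}^{2}/{\sigma}_{\eta}^{2}\big|_{({B_2},{J_2})}\geq\varphi'(D)$, and monotonicity of $\log(1+\cdot)$ gives ${\vartheta'_r}(B,1,{\sigma}_{\eta}^{2})\geq\frac12\log(1+\varphi'(D))-\rho$. Taking the infimum over all feasible $(B,1,{\sigma}_{\eta}^{2})$ yields ${\vartheta'_u}(D)\geq\frac12\log(1+\varphi'(D))-\rho$, and letting $\rho\downarrow 0$ removes the slack, so that $\mathcal{R}(D)\geq{\vartheta'_u}(D)\geq\frac12\log(1+\varphi'(D))$.

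The only genuinely delicate point is matching the feasible sets: Theorem~\ref{thm2} constrains $J=1$, whereas $\varphi'(D)$ optimizes over all pairs $(B,J)$, so I must verify that the biproper pair $({B_{2}},{J_{2}})$ returned by Lemma~\ref{lemma43} really is admissible for the $(B,J)$-optimization (internal stability, well-posedness, and the preserved bound ${\sigma}_{z'}^{2}\leq D$). It is exactly this enlargement of the domain that lets the infimum $\varphi'(D)$ absorb the pair produced from the $J=1$ setting and thereby lower-bound every $J=1$ term, which is the direction needed. Everything else is the substitution $S_{u'}=S_{r}$ and the $\rho\downarrow 0$ limit, both routine.
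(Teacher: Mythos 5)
Your proposal is correct and takes essentially the same route as the paper's own proof: both chain Theorem~\ref{thm2} with Lemma~\ref{lemma43}, using the observation that $J=1$ makes $u'=z^{-h}r$ so $S_{u'}=S_{r}$ and hence ${\vartheta}'_u(D)$ coincides with the infimum of ${\vartheta}'_r(B,1,\sigma_\eta^2)$, then using the biproper pair produced by the lemma (which preserves $\sigma_{z'}^2\leq D$ and is therefore feasible for the $\varphi'(D)$ optimization) before removing the slack parameters. The only cosmetic difference is bookkeeping: the paper fixes a $\zeta$-near-optimal triple $(B_\zeta,1,\sigma_{\eta_\zeta}^2)$ and lets $\zeta,\rho\downarrow 0$, whereas you establish the per-point bound ${\vartheta}'_r(B,1,\sigma_\eta^2)\geq\frac{1}{2}\log(1+\varphi'(D))-\rho$ uniformly and then infimize, which is an equivalent argument.
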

\begin{proof}
 According to the feasibility of finding ${{{\vartheta}'}_u}(D)$ (shown in Appendix \ref{suba1}), there exist the triplet $({{B}_{\zeta}},1,{{\sigma}_{{\eta}_{\zeta}}^{2}})$, with ${{B}_{\zeta}}$ a proper LTI filter and ${{\sigma}_{{\eta}_{\zeta}}^{2}}\in{\mathbb{R}^{+}}$, that guarantees ${{\sigma}_{z'}^{2}}\leq{D}$ for the system of Fig.~\ref{fig2}. Furthermore, based upon the definition of ${{{\vartheta}'}_u}$ and ${{{\vartheta}'}_r}$  in (\ref{eq13}) and (\ref{eq18}), the following can be derived for any $\zeta>0$:
 \begin{equation}\label{eq24}
 {{{\vartheta}'}_u}(D)+{\zeta}\geq{{\vartheta}_{r}^{'}({{B}_{\zeta}},1,{{\sigma}_{{\eta}_{\zeta}}^{2}})}.
 \end{equation}
So regarding Lemma~\ref{lemma43}, since there exist a biproper filter ${{\tilde{J}}_{\zeta}}$ and a proper one ${{\tilde{B}}_{\zeta}}$ making the LTI feedback loop of Fig.~\ref{fig2} internally stable and well-posed and keeping ${\sigma}_{z'}^{2}$ intact, the following can be concluded:
 \begin{equation}\label{eq25}
 {{{\vartheta}'}_u}(D)+{\zeta}\geq{{{\frac{1}{2}}\log(1+{\frac{{\sigma}_{t}^{2}}{{\sigma}_{\eta}^{2}}}){\mid}_{(B,J,{\sigma}_{\eta}^{2})=({\tilde{B}}_{\zeta},{\tilde{J}}_{\zeta},{{\sigma}_{{\eta}_{\zeta}}^{2}})}}-{\rho}} 
 \end{equation}
Now the proof is completed by noting that (\ref{eq25}) holds for any $\zeta,\rho>0$
\end{proof}
           \begin{figure}[thpb]
           	\centering
           	\includegraphics[width=6cm,height=3.5cm]{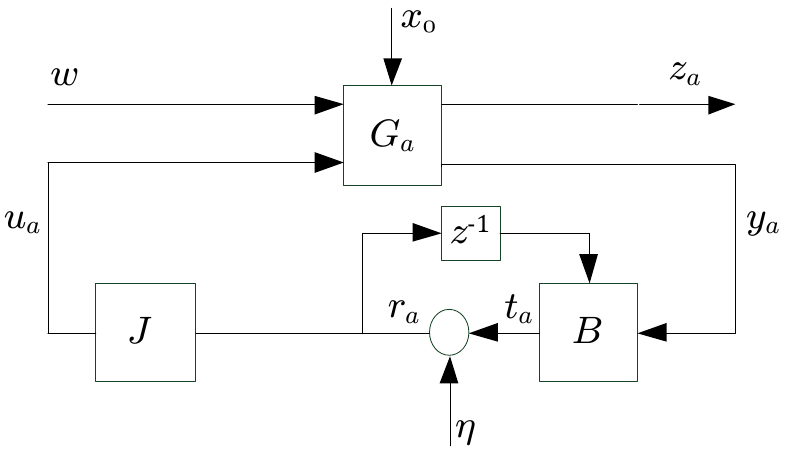}
           	\caption{Auxiliary system for ${{\varphi'}(D)}$ minimization}
           	\label{fig4}
           \end{figure} 
Corollary~\ref{cor1} shows that for the general structure of Fig.~\ref{fig1}, a lower bound on infimum data rate required to achieve ${{{\sigma}_{z}^{2}}\leq{D}}$ is obtained by minimizing the SNR, ${{{\sigma}_{t}^{2}}/{{\sigma}_{\eta}^{2}}}$, over the auxiliary LTI feedback loop of Fig.~\ref{fig2} subject to ${{{\sigma}_{z'}^{2}}\leq{D}}$. To characterize ${{\varphi'}(D)}$, we will mostly use properties of linear systems and some results  on $H_2$ optimization with input-delay.

Consider the auxilary structure of Fig.~\ref{fig4}, where except for shifting the delay block to the plant model, which leads to
\begin{equation}\label{eq26}
{G_a}=\left[
\begin{array}{lr}
G_{11}&{z^{-h}}{G_{12}}\\
{G_{21}}&{z^{-h}}{G_{22}}
\end{array}\right], 
\end{equation}  
the same assumptions as Fig.~\ref{fig2} hold. The NCS of Fig.~\ref{fig4} is internally stable and well-posed if and only if the transfer function ${T_a}$ from ${[\eta,w,{\psi}_1,{\psi}_2]^T}$ to ${[{z_a},{y_a},{r_a},{u_a}]^T}$ in Fig.~\ref{fig5} is a member of ${\mathcal{R}\mathcal{H}}_{\infty}$. Regarding (\ref{eq27}), it can be easily shown that ${T_a}=T$.  So the feedback loop of Fig.~\ref{fig4} and Fig.~\ref{fig2} are equivalent in the sense of internal stability and well-posed-ness. Moreover, the SNR and output variance of the NCS depicted in Fig.~\ref{fig2} can be stated in terms of $H_2$-norms as follows:

 \begin{IEEEeqnarray}{rl}\label{eq28}
 	\begin{split}
{\frac{{\sigma}_{t}^{2}}{{\sigma}_{\eta}^{2}}}&=\norm{{M-1}}_{2}^{2}+{\norm{{B_y}M{G_{21}}}_{2}^{2}}{{\sigma}_{\eta}^{-2}},\\
{{\sigma}_{z'}^{2}}&=\norm{{{G_{11}}+{G_{12}}N{(1-{G_{22}}N)}^{-1}G_{21}}}_{2}^{2}+{\norm{{G_{12}}JM}_{2}^{2}}{{\sigma}_{\eta}^{2}},
\end{split}
 \end{IEEEeqnarray}
where $N\triangleq{J{B_y}{z^{-h}}{(1-{B_{r}}{z^{-1}})}^{-1}}$. Likewise, the following holds for the structure of Fig.~\ref{fig4}:
 \begin{IEEEeqnarray}{rl}\label{eq29}
 	\begin{split}
 {\frac{{\sigma}_{t_a}^{2}}{{\sigma}_{\eta}^{2}}}&=\norm{{{M_a}-1}}_{2}^{2}+{\norm{{B_y}{M_a}{G_{21}}}_{2}^{2}}{{\sigma}_{\eta}^{-2}},\\
 {{\sigma}_{z_a}^{2}}&=\norm{{{G_{11}}+{G_{12}}{z^{-h}}{N_a}{(1-{G_{22}}{z^{-h}}{N_a})}^{-1}G_{21}}}_{2}^{2}+\\
 &\qquad{\norm{{G_{12}}J{M_a}}_{2}^{2}}{{\sigma}_{\eta}^{2}},
 	\end{split}
 \end{IEEEeqnarray}
in which ${N_a}=J{B_y}{(1-{B_{r}}{z^{-1}})}^{-1}$ and ${M_a}=M$. As seen, comparing (\ref{eq28}) and (\ref{eq29}) signifies the equalities $({{{\sigma}_{t}^{2}}/{{\sigma}_{\eta}^{2}}})=({{{\sigma}_{t_a}^{2}}/{{\sigma}_{\eta}^{2}}})$ and ${{\sigma}_{z'}^{2}}={{\sigma}_{z_a}^{2}}$. So every triplet $(B,J,{{\sigma}_{\eta}^{2}})$ that can infimize the SNR while making the system output satisfy ${{{\sigma}_{z'}^{2}}\leq{D}}$ for the NCS of Fig.~\ref{fig4}, can do the same for the the LTI system of our interest, in Fig.~\ref{fig2}, and vice versa. 
    \begin{figure}[thpb]
    	\centering
    	\includegraphics[width=6cm,height=3.5cm]{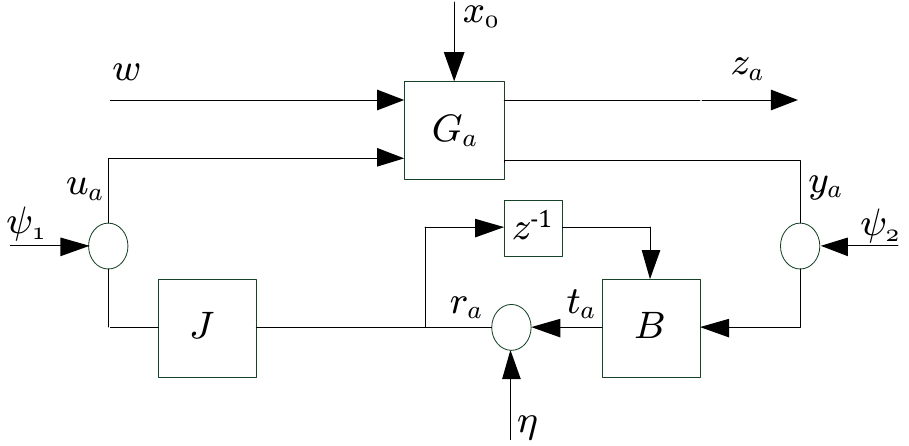}
    	\caption{Stability analysis of the equivalent system}
    	\label{fig5}
    \end{figure}
In other words, the NCSs in Fig.~\ref{fig2} and Fig.~\ref{fig4} are equivalent regarding the SNR-perfomance optimization problem in (\ref{eq23}) as well. This problem is studied for such feedback systems as auxiliary system of Fig.~\ref{fig4} in \cite{silva2013characterization}. So the approach proposed in \cite{silva2013characterization} can be utilized to derive the lower bound on ${{\mathcal{R}}(D)}$ regarding (\ref{eq23}). Consequently, it can be concluded that the problem of finding ${{\varphi'}(D)}$ is equivalent to an SNR-constrained optimal control problem which was proved to be convex. 
 As another result, ${{\varphi'}(D)}$ being a monotonically decreasing fuction of $D$ can be deduced. All in all, the inteplay between the desired performance, the average data rate and  the time delay is characterized through (\ref{eq23}), (\ref{eq28}) and (\ref{eq29}).   
 \section{SIMULATION EXAMPLE}
 Consider the following transfer function representation for the plant $G$ in NCS of Fig.~\ref{fig1}:
 \begin {equation}\label{eq30}
z=	\frac{0.165}{(z-2)(z-0.5789)}(w+u),\quad y=z,
\end{equation}
where $(x_0,w)$ satisfies Assumption \ref{ass1}. Using the results of the previous section, we simulate the lower bound on ${\mathcal{R}}(D)$ obtained in (\ref{eq23}) regarding five different values of delay ($h=\{0,1,2,3,4\}$) and for each $h$, over a range of $D>{D_{inf}}(h)$. Fig.~\ref{fig7} demonstrates the behaviour of the lower bound with respect to $D$ and $h$. Additionally, it shows the operational rates when using scalar uniform quantizer for $h=0$ and $h=4$. First, as expected, ${{\varphi'}(D)}$ in (\ref{eq23}) is a monotonically decreasing function of $D$. Secondly and more importantly, ${D_{inf}}(h)$ increases when $h$ grows. So greater delay yields worse best performance. The most significant outcome is associated with the behaviour of ${\mathcal{R}}(D)$ in (\ref{eq5}) with respect to delay. It can be observed from the curves in Fig.~\ref{fig7} that for a fixed $D$, ${{\varphi'}(D)}$ is increasing in $h$.
Therefore, a delay in the channel forces an increase in the infimum data 
           \begin{figure}[thpb]
           	\centering
           	\includegraphics[width=8cm,height=4cm]{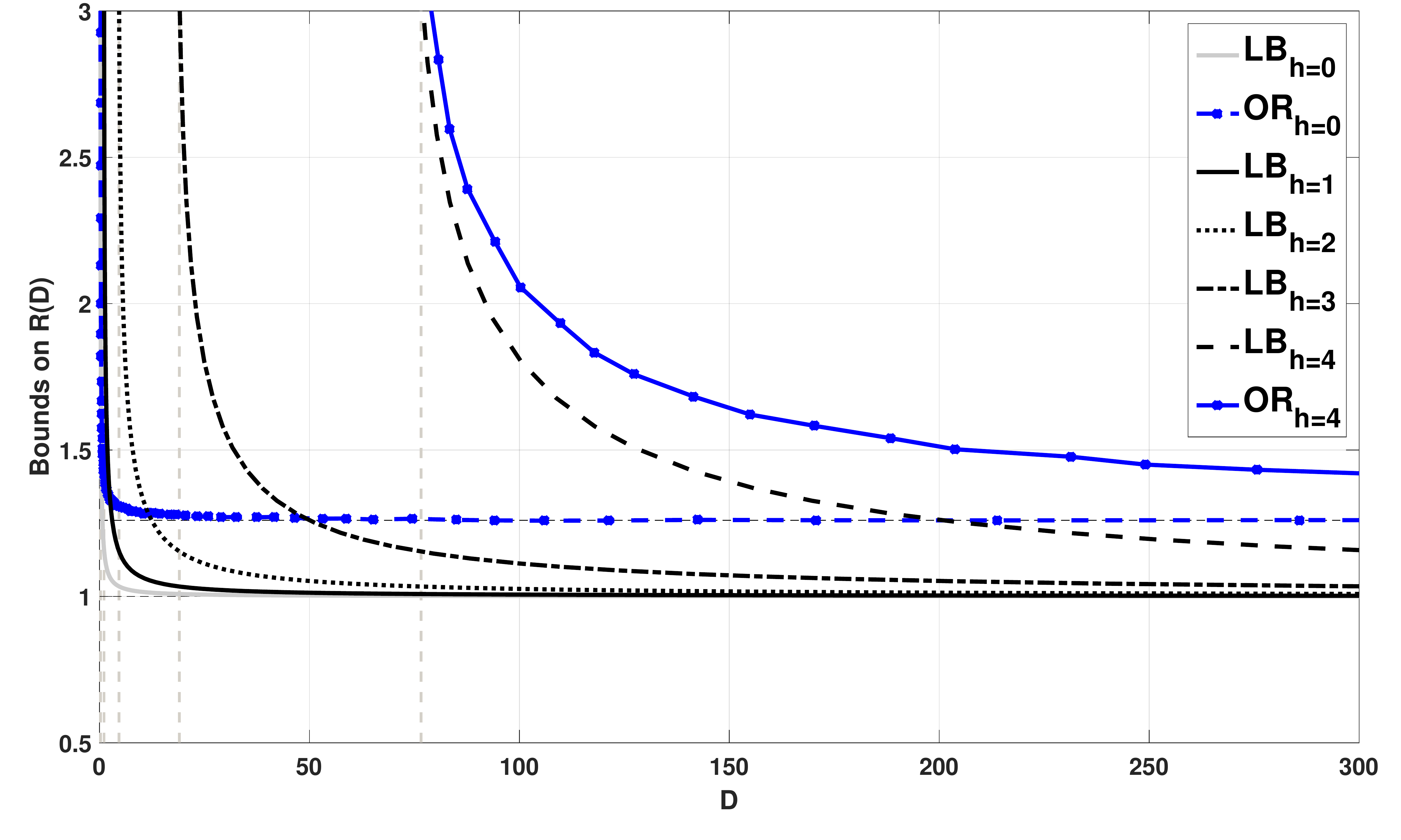}
           	\caption{Bounds on $\mathcal{R}(D)$ in (\ref{eq5}) for different values of time delay $h$}
           	\label{fig7}
           \end{figure} 
           \begin{figure}[thpb]
           	\centering
           	\includegraphics[width=6cm,height=3.5cm]{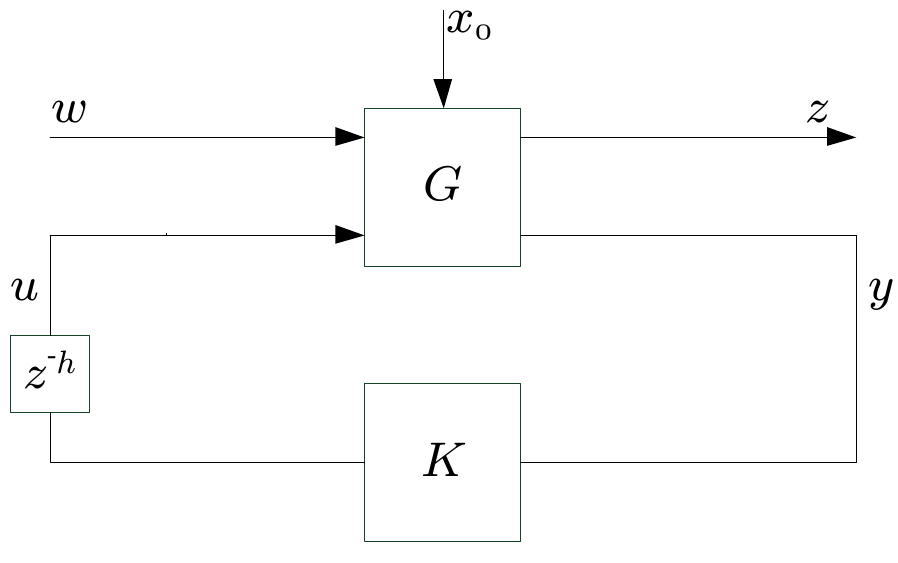}
           	\caption{Standard feedback loop for proving feasibility of finding $\mathcal{R}(D)$}
           	\label{figa1}
           \end{figure} 
rate required to achieve a quadratic level of performance. The greater delay, the higher rate to be spent in order to get a certain level of performance. Another observation is the convergence of the obtained infimal data rates  to the minimum rate required for stabilizability as $D\to\infty$. As seen in Fig.~{\ref{fig7}}, high rates are required to attain the ideal non-networked performance ${D_{inf}(h)}$. Nevertheless, in our case, using only 3 bits can result in a ${\sigma}_{z}^{2}$ fairly close to ${D_{inf}(h)}$. It should be noted that in order to make our contribution more clear, the plant in (\ref{eq30}) has the same dynamics as the one considered in \cite{silva2013characterization}.    	     
 So the curve related to the delay-free case ($h=0$) is identical to the lower bound curve obtained in \cite{silva2013characterization}. Along the lines of \cite{silva2011framework,silva2013characterization}, we now simply replace the AWGN $\eta$ in the independent coding scheme depicted in Fig.~\ref{fig2}, by a uniform scalar quantizer in order to assess the operational performance caused by a simple coding scheme. It is interesting to note that the obtained operational average data rate in Fig.~\ref{fig7} is at most around $0.3$ bits away from the derived lower bound at all performance .       
\section{CONCLUSIONS}

In this paper, rate-constrained networked control systems comprising noisy LTI plants, causal but otherwise arbitrary coding-control schemes and digital noiseless communication channels with time delay, have been studied. For such NCSs, a certain level of performance is attainable if and only if the average data rate does not fall below a minimal value. A lower bound on this infimum rate has been obtained. Through a numerical example, it has been illustrated that the channel's time delay  increases the infimum average data rate needed to achieve a prsecribed level of performance. Moreover, by using a simple scalar quantizer, operational average data rates fairly close (around $0.3$ bits) to the lower bound have been obtained.     
\appendix

\subsection{Feasibility proofs}\label{suba1}
\subsubsection{Feasibility of ${D_{inf}(h)}$}
Suppose that in the standard architecture depicted in Fig.~\ref{figa1}, $G$, $x_0$ and $w$ satisfy Assumption~\ref{ass1} and $K$ follows $u(k)={{\mathcal{K}}_{k}}({y}^{k-h})$.
\begin{figure}[thpb]
	\centering
	\includegraphics[width=6cm,height=3.5cm]{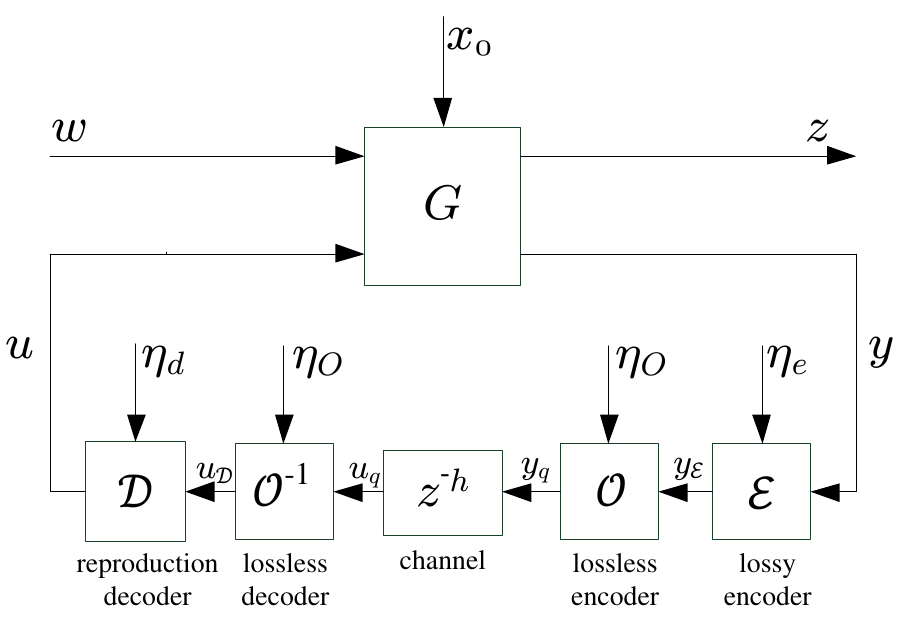}
	\caption{Coding with lossy, lossless and reproduction parts}
	\label{figa2}
\end{figure} 
 Regarding the Gaussianity of $x_0$ and $w$ and the fact that $G$ is LTI, it can be implied from some results in \cite{aastrom2012introduction} that:
\begin{equation*}
	{D}_{inf}(h)=\inf_{K\in{\kappa}}{\sigma}_{z}^{2},
\end{equation*}  
in which ${\sigma}_{z}^{2}$ denotes the variance of output $z$ and ${\kappa}$ is the set of all proper LTI filters which render the system of Fig.~\ref{figa1} internally stable and well-posed. The assumptions considered for $G$ guarantee that finding ${D}_{inf}(h)$ is feasible.
 
\subsubsection{Feasibility of ${{\vartheta}_{u}^{'}(D)}$}
Since  ${D}_{inf}(h)$ can be obtained, for every $\zeta\in(0,{D-{D_{inf}(h)})}$, there exists ${K_1}\in{\kappa}$ which gives ${\sigma}_{z_{1}}^{2}\triangleq{{\sigma}_{z}^{2}}\mid_{K=K_1}\leq{{D_{inf}(h)}+\zeta<D}$ for the system of Fig.~\ref{figa1}. Applying ${K_1}$ to this system results in a stable setting which is a special case of the NCS depicted in Fig.~\ref{fig2} with $J=1$ and $r=t={K_1}y'$ where the steady-state variance of $t$, ${{\sigma}_{t}^{2}}={{\sigma}_{t_{1}}^{2}}$, is finite. Therefore, since ${{K_1}\in{\kappa}}$, it can bring internal stability and well-posed-ness to the feedback loop of Fig.~\ref{fig2} in the presence of any additive noise $\eta$ with steady-sate variance ${{\sigma}_{{\eta}}^{2}}\in{{\mathbb{R}}^{+}}$. 
So ${\sigma}_{z'}^{2}={\sigma}_{z_{1}}^{2}+{{\chi}_{z}}{{\sigma}_{\eta}^{2}}$ and ${\sigma}_{t}^{2}={{\sigma}_{t_{1}}^{2}}+{{\chi}_{t}}{{\sigma}_{\eta}^{2}}$ can be concluded, when taking $\eta$ into account as an AWGN with finite variance ${{\sigma}_{{\eta}}^{2}}$ for the system of Fig.~\ref{fig2}. It should be noted that  ${{\chi}_{t}},{{\chi}_{z}}\geq{0}$ depend only on $K_1$. 
Now by choosing $\zeta=({D-{D_{inf}}(h)})/3$ and the variance ${{\sigma}_{{\eta}}^{2}}=({D-{D_{inf}}(h)})/(3{{\chi}_{z}})$ for the AWGN, there exists ${K_1}\in{\kappa}$ rendering the NCS of Fig.~\ref{fig2} internally stable and well-posed in a way that  ${{\sigma}_{z'}^{2}{\mid}_{(B,J,{\sigma}_{\eta}^{2})=(K_1,1,{\sigma}_{\eta}^{2})}}\leq{{D_{inf}(h)}+{\frac{2}{3}}(D-{D_{inf}(h)})}<D$. Then the following can be obtained for the structure of Fig.~\ref{fig2}:
\begin{equation}\label{eq31}
{{{\frac{{\sigma}_{t}^{2}}{{\sigma}_{\eta}^{2}}}}{\mid}_{(B,J,{\sigma}_{\eta}^{2})=(K_1,1,{\sigma}_{\eta}^{2})}}={\frac{{3{{\sigma}_{t_1}^{2}}{{\chi}_{z}}}}{{D-{D_{inf}}(h)}}}+{{\chi}_{t}}<\infty.
\end{equation}
So regarding Jensen's inequality and concavity of logarithm, it can be deduced that the problem of finding ${{\vartheta}_{u}^{'}(D)}$ in (\ref{eq13}) is feasible for every $D>{D_{inf}}(h)$.
\subsubsection{Feasibility of $\varphi'(D)$}
Immediately from (\ref{eq31}), feasibility of the problem of finding $\varphi'(D)$ in (\ref{eq23}) is inferred for any $D>{D_{inf}}(h)$. 
\subsection{Proof of Theorem~\ref{th1}}\label{suba2}  
The coding scheme in Fig.~\ref{fig1} comprises lossy, lossless and reproduction part as depicted in Fig.~\ref{figa2}. On the encoder side, (\ref{eq3}) can be rewritten as follows:
\begin{IEEEeqnarray}{rl}\label{eq32}
	\begin{split}
{{y_{\mathcal{E}}}(k)}&={\mathcal{E}_k}({y^k},{\eta_{e}^{k}})\\
{y_{q}(k)}&={{\mathcal{O}}_{k}}({y_{\mathcal{E}}^{k}},{\eta_{O}^{k}}),	
    \end{split}
\end{IEEEeqnarray}
in which the side informations belong to well-defined sets, i.e. ${{\eta_{e}}(k)}\in {{{\Theta}_{\mathcal{E}}}(k)}$ and ${{\eta_{O}}(k)}\in {{{\Theta}_{{O}}}(k)}$. Likewise, the decoding scheme in (\ref{eq4}) is given by
\begin{IEEEeqnarray}{rl}\label{eq33}
	\begin{split}
{{u_{\mathcal{D}}}(k)}&=
\begin{cases}
	{\mathcal{O}_{k}^{-1}}({\eta_{O}^{k}}), &0\leq{k}<h,\\
	{\mathcal{O}_{k}^{-1}}({{u_{q}^{k}}},{\eta_{O}^{k}}), &k\geq{h},
\end{cases}\\ 
{u(k)}&={\mathcal{D}_k}({u_{\mathcal{D}}^{k}},{\eta_{d}^{k}}),
	\end{split}	
\end{IEEEeqnarray}
where ${{u_{\mathcal{D}}}(k)}={{y_{\mathcal{E}}}(k-h)}$ holds for $k\geq{h}$. In addition, ${{\eta_{d}}(k)}\in {{{\Theta}_{\mathcal{D}}}(k)}$ and ${{{\Theta}_{\mathcal{D}}}(k)}$ is a well-defined set. The set ${{{\Theta}_{{O}}}(k)}$ is defined as ${{{\Theta}_{{O}}}(k)}\triangleq{{{{\Theta}_{\mathcal{E}}}(k)}\cap{{{\Theta}_{\mathcal{D}}}(k)}}$. Regarding (\ref{eq32})-(\ref{eq33}), the decoder has access to ${y_{\mathcal{E}}^{k-1}}$ and ${\eta_{O}^{k}}$ when receiving ${y_{\mathcal{E}}(k)}$ at time $k+h$. Accordingly, the rate at each time instant is lower-bounded as follows:
\begin{equation}\label{eq34}
{R(k)}\geq{H({{y_{\mathcal{E}}}(k)}\mid{{y_{\mathcal{E}}^{k-1}},{\eta_{O}^{k}}})}.
\end{equation}     
Furthermore, based on the closed-loop dynamics of the NCS depicted in Fig.~\ref{fig1}, its measurement output, $y(k)$, satisfies
 \begin{equation}\label{eq35}
 	{y(k)}={\mathcal{G}_k}({{u}^{k-1}},{w^k},{x(0)}),
 \end{equation}
  where ${\mathcal{G}_k}$ is a linear deterministic mapping. Accordingly, we can derive the following chain of inequalities and identities:
	\begin{IEEEeqnarray}{rl}\label{eq36}
		\begin{split}
   	{R(i)}&\myeqp{H({{y_{\mathcal{E}}}(i)}\mid{{y_{\mathcal{E}}^{i-1}},{\eta_{O}^{i}}})} \\
   	&\myeqq{H({{y_{\mathcal{E}}}(i)}\mid{{y_{\mathcal{E}}^{i-1}},{\eta_{d}^{i+h}}})}\\
   	&\myeqr{H({{y_{\mathcal{E}}}(i)}\mid{{y_{\mathcal{E}}^{i-1}},{\eta_{d}^{i+h}}})-H({{y_{\mathcal{E}}}(i)}\mid{{y_{\mathcal{E}}^{i-1}},{\eta_{d}^{i+h}},y^i})}\\
   	&\myeqs{I({{y_{\mathcal{E}}}(i);y^i}\mid{{y_{\mathcal{E}}^{i-1}},{\eta_{d}^{i+h}}}})\\
   	&\myeqt{I({{y_{\mathcal{E}}}(i);y^i}\mid{{u_{\mathcal{D}}^{i+h-1}},{\eta_{d}^{i+h}}}})\\
   	&\myequ{I({{y_{\mathcal{E}}}(i);y^i}\mid{{u^{i+h-1}},{\eta_{d}^{i+h}}}})\\
   	&\myeqv{I({{u}(i+h);y^i}\mid{{u^{i+h-1}},{\eta_{d}^{i+h}}}})\\
   	&\myeqw{I({{u}^{i+h},{\eta_{d}^{i+h}};y^i})-I({{u}^{i+h-1},{\eta_{d}^{i+h}};y^i})}\\
   	&\myeqx{I({{u}^{i+h};y^i})-I({{u}^{i+h-1},{\eta_{d}^{i+h}};y^i})}\\
   	&\myeqy{I({{u}(i+h);y^i}\mid{{u^{i+h-1}}})-I({{\eta_{d}^{i+h}};y^i}\mid{{u^{i+h-1}}}})\\
   	&\myeqz{I({{u}(i+h);y^i}\mid{{u^{i+h-1}}})}\\
   	\end{split}
   	\end{IEEEeqnarray}
 where (p) is immediately deduced from (\ref{eq34}). Founded upon one property of the entropy and ${{{\Theta}_{{I}}}(k)}\triangleq{{{{\Theta}_{\mathcal{E}}}(k)}\cap{{{\Theta}_{\mathcal{D}}}(k)}}$, (q) is yielded. (r) and (s) follow from positiveness of the entropy and definition of mutual information, respectively. Additionally, (t) is given based on one property of mutual information and ${{u_{\mathcal{D}}}(k)}={{y_{\mathcal{E}}}(k-h)}$; $k\geq{h}$. According to invertibility of the decoder, (u) is concluded. As a result of Assumption \ref{ass1}, (\ref{eq33}) and (\ref{eq35}), Lemma 4.2 in \cite{silva2011framework} holds for the system of our interest in Fig.~\ref{fig1}. From the second claim of the this Lemma,  the Markov chain${{y}^{i+h}-{{u}_{\mathcal{D}}(i+h)}-u(i+h)}$, conditioned upon $({{\eta}_{d}^{i+h}},{{u}_{\mathcal{D}}^{i+h-1}})$, is inferred. Since  ${{u_{\mathcal{D}}}(i)}={{y_{\mathcal{E}}}(i-h)}$; $i\geq{h}$, The Markov chain ${y}^{i+h}-{{y}_{\mathcal{E}}(i)}-u(i+h)$, conditioned upon $({{\eta}_{d}^{i+h}},{{u}_{\mathcal{D}}^{i+h-1}})$, holds. So the Markov chain ${y}^{i}-{{y}_{\mathcal{E}}(i)}-u(i+h)$, conditioned upon $({{\eta}_{d}^{i+h}},{{u}_{\mathcal{D}}^{i+h-1}})$, holds as well. With all this in mind, the validity of (v) is verified regarding the invertibility of the decoder and (M3) in \cite{silva2011framework}. (w), (x) and (y) are caused by the  chain rule of mutual information. (z) is deduced based upon the Markov chain ${{\eta}_{d}^{i+h}}-{u}^{i+h-1}-{y}^{i}$ which is caused by another Markov chain ${{\eta}_{d}^{i+h}}-{u}^{i+h-1}-{y}^{i+h}$ (Lemma 4.2 in \cite{silva2011framework}). Hence, 
\begin{IEEEeqnarray}{rl}\label{eq37}
\begin{split}
 {R(i)}&\geq{I({{u}(i+h);y^i}\mid{{u^{i+h-1}}})}\\
 \Sigma_{i=0}^{k-1}{R(i)}&\geq\Sigma_{i=0}^{k-1}{I({{u}(i+h);y^i}\mid{{u^{i+h-1}}})}\\
 &\myeqaa\Sigma_{i=0}^{k-1-h}{I({{u}(i+h);y^i}\mid{{u^{i+h-1}}})}\\
 &\myeqab\Sigma_{i=-h}^{k-1-h}{I({{u}(i+h);y^i}\mid{{u^{i+h-1}}})}
\end{split}
\end{IEEEeqnarray}
where (aa) holds due to the positive-ness of mutual information and (ab) stems from Assumption~\ref{ass2}. Consequently
\begin{equation} 
{\Sigma}_{i=0}^{k-1}{R(i)}\geq{\Sigma}_{i=0}^{k-1}{I({{u}(i);y^{i-h}}\mid{{u^{i-1}}})},
\end{equation} 
from which (\ref{eq6}) is resulted immediately.
\bibliographystyle{./IEEEbib} 
 \bibliography{./IEEEabrv,./refs}

\begin{thebibliography}{10}

\bibitem{zhang2013network}
Lixian Zhang, Huijun Gao, and Okyay Kaynak,
\newblock ``Network-induced constraints in networked control systems—a
  survey,''
\newblock {\em IEEE Transactions on Industrial Informatics}, vol. 9, no. 1, pp.
  403--416, 2013.

\bibitem{baillieul2007control}
John Baillieul and Panos~J Antsaklis,
\newblock ``Control and communication challenges in networked real-time
  systems,''
\newblock {\em Proceedings of the IEEE}, vol. 95, no. 1, pp. 9--28, 2007.

\bibitem{delchamps1990stabilizing}
David~F Delchamps,
\newblock ``Stabilizing a linear system with quantized state feedback,''
\newblock {\em IEEE Transactions on Automatic Control}, vol. 35, no. 8, pp.
  916--924, 1990.

\bibitem{wong1999systems}
Wing~Shing Wong and Roger~W Brockett,
\newblock ``Systems with finite communication bandwidth constraints--{I}{I}:
  Stabilization with limited information feedback,''
\newblock {\em IEEE Transactions on Automatic Control}, vol. 44, no. 5, pp.
  1049--1053, 1999.

\bibitem{almakhles2015stability}
Dhafer~J Almakhles, Akshya~K Swain, and Nitish~D Patel,
\newblock ``Stability and performance analysis of bit-stream-based feedback
  control systems,''
\newblock {\em IEEE Transactions on Industrial Electronics}, vol. 62, no. 7,
  pp. 4319--4327, 2015.

\bibitem{nair2004stabilizability}
Girish~N Nair and Robin~J Evans,
\newblock ``Stabilizability of stochastic linear systems with finite feedback
  data rates,''
\newblock {\em SIAM Journal on Control and Optimization}, vol. 43, no. 2, pp.
  413--436, 2004.

\bibitem{lupu2015information}
Mircea~F Lupu, Mingui Sun, Fei-Yue Wang, and Zhi-Hong Mao,
\newblock ``Information-transmission rates in manual control of unstable
  systems with time delays,''
\newblock {\em IEEE Transactions on Biomedical Engineering}, vol. 62, no. 1,
  pp. 342--351, 2015.

\bibitem{zhao2015bode}
Yingbo Zhao and Vijay Gupta,
\newblock ``A {B}ode-like integral for discrete linear time-periodic systems,''
\newblock {\em IEEE Transactions on Automatic Control}, vol. 60, no. 9, pp.
  2494--2499, 2015.

\bibitem{ostergaard2016multiple}
Jan {\O}stergaard and Daniel Quevedo,
\newblock ``Multiple descriptions for packetized predictive control,''
\newblock {\em EURASIP Journal on Advances in Signal Processing}, vol. 2016,
  no. 1, pp. 1--16, 2016.

\bibitem{martins2008feedback}
Nuno~C Martins and Munther~A Dahleh,
\newblock ``Feedback control in the presence of noisy channels:
  {“}{B}ode-like{”} fundamental limitations of performance,''
\newblock {\em IEEE Transactions on Automatic Control}, vol. 53, no. 7, pp.
  1604--1615, 2008.

\bibitem{silva2011framework}
Eduardo~I Silva, Milan~S Derpich, and Jan {\O}stergaard,
\newblock ``A framework for control system design subject to average data-rate
  constraints,''
\newblock {\em IEEE Transactions on Automatic Control}, vol. 56, no. 8, pp.
  1886--1899, 2011.

\bibitem{silva2013characterization}
Eduardo~I Silva, Milan Derpich, Jan {\O}stergaard, and Marco Encina,
\newblock ``A characterization of the minimal average data rate that guarantees
  a given closed-loop performance level,''
\newblock {\em IEEE Transactions on Automatic Control}, vol. 61, no. 8, pp.
  2171--2186, 2016.

\bibitem{tanaka2016rate}
Takashi Tanaka, Karl~Henrik Johansson, Tobias Oechtering, Henrik Sandberg, and
  Mikael Skoglund,
\newblock ``Rate of prefix-free codes in {L}{Q}{G} control systems,''
\newblock {\em arXiv preprint arXiv:1604.01227}, 2016.

\bibitem{cover2012elements}
Thomas~M Cover and Joy~A Thomas,
\newblock {\em Elements of information theory},
\newblock John Wiley \& Sons, 2012.

\bibitem{derpich2013fundamental}
Milan~S Derpich, Eduardo~I Silva, and Jan {\O}stergaard,
\newblock ``Fundamental inequalities and identities involving mutual and
  directed informations in closed-loop systems,''
\newblock {\em arXiv preprint arXiv:1301.6427}, 2013.

\bibitem{porat1994digital}
Boaz Porat,
\newblock {\em Digital processing of random signals: Theory and methods},
\newblock Prentice-Hall, Inc., 1994.

\bibitem{gray2006toeplitz}
Robert~M Gray,
\newblock {\em Toeplitz and circulant matrices: A review},
\newblock now publishers inc, 2006.

\bibitem{aastrom2012introduction}
Karl~J {\AA}str{\"o}m,
\newblock {\em Introduction to stochastic control theory},
\newblock Courier Corporation, 2012.

\end{thebibliography}
\addtolength{\textheight}{-12cm}   



\end{document}